\documentclass[a4paper]{article}
\usepackage{a4wide}
\usepackage{amsmath,amsfonts,amssymb,amsthm}
\usepackage{color}
\newcommand{\R}{{\mathbb R}}

\newcommand{\Tr}{{\rm Tr}}

\newcommand{\x}{{\bf x}}

\newtheorem{theorem}{Theorem}[section]

\newtheorem{proposition}[theorem]{Proposition}
\newtheorem{corollary}[theorem]{Corollary}
\newtheorem{lemma}[theorem]{Lemma}

\theoremstyle{definition}
\newtheorem{remark}[theorem]{Remark}

\numberwithin{equation}{section}

\begin{document}

\noindent 
\begin{center}
\textbf{\large Adiabatically switched-on electrical bias in continuous systems, and the Landauer-B{\"u}ttiker formula}
\end{center}

\begin{center}
October 4th, 2007
\end{center}

\vspace{0.5cm}

\noindent 

\begin{center}
\small{ 
Horia D. Cornean\footnote{Department of Mathematical Sciences, 
    Aalborg
    University, Fredrik Bajers Vej 7G, 9220 Aalborg, Denmark; e-mail:
    cornean@math.aau.dk},
Pierre Duclos\footnote{Centre de Physique Th\'eorique, Campus de Luminy, Case 907
13288 Marseille cedex 9, France }${}^,$
\footnote{Universit{\'e} du sud Toulon-Var, Toulon, France; 
e-mail: duclos@univ-tln.fr}, 
Gheorghe Nenciu\footnote{Dept. Theor. Phys.,
University of Bucharest, P. O. Box MG11, 
RO-76900 Bu\-cha\-rest, Romania; }${}^,$
\footnote{
Inst. of Math. ``Simion Stoilow'' of
the Romanian Academy, P. O. Box 1-764, RO-014700 Bu\-cha\-rest, Romania;
e-mail: Gheorghe.Nenciu@imar.ro},
Radu Purice
\footnote{
Inst. of Math. ``Simion Stoilow'' of
the Romanian Academy, P. O. Box 1-764, RO-014700 Bu\-cha\-rest, Romania;
e-mail: Radu.Purice@imar.ro}}
     
\end{center}

\vspace{0.5cm}

\noindent

\begin{abstract}
Consider a three dimensional system which looks like a cross-connected
pipe system, i.e. a small sample coupled to a finite number of
leads. We investigate the current running through this system, in
the linear response regime, when we adiabatically turn on an electrical bias
between leads. The main technical tool is the use of a finite volume
regularization, which allows us to define the current coming out of a
lead as the time derivative of its charge. We finally prove that in
virtually all physically interesting situations, the conductivity
tensor is given by a Landauer-B{\"u}ttiker type formula.
\end{abstract}

\vspace{0.5cm}

\section{Introduction}

In any experiment which involves running a current through a microscopic sample,
having a good quantum description of various
conductivity coefficients is a priority. Such a description has been first derived
by Landauer \cite{Landauer} and then generalized by B\"uttiker
\cite{But}; this is what one now calls the Landauer-B\"uttiker
formalism. The main idea is that the conductivity of mesoscopic
samples connected to ideal leads where the carriers are quasi free
fermions, is completely characterized by a one particle scattering
matrix. 

Many people have since contributed to the justification of this
formalism, starting from the first principles of non-equilibrium
quantum statistical mechanics. In this respect, there are two
different ways to model such a conduction problem, and let us briefly
discuss both of them. 

The first approach is the one in which { one starts} with several decoupled semi-infinite
leads, each of them being at equilibrium \cite{JP1}. Let us assume for simplicity
that they are in grand canonical Gibbs states having the same temperature but
different chemical potentials. Then at $t=0$ they are
suddenly joint together with a sample, and the new composed system is allowed to evolve
freely until it reaches a steady state at $t=\infty$. Then one can define a
current as the (Cesaro limit of the) time derivative of a regularized charge operator, and
after lifting this regularization one obtains the L-B formula. This
procedure is by now very well understood and completely solved in a
series of very recent works (see \cite{AJPP}, \cite{Ne1} and references
therein). Let us mention that in this approach a crucial ingredient is the fact
that the perturbation introduced by the sample and coupling 
is localized in space. One can even allow the carriers to interact in
the sample \cite{JOP}, and the theory still works (even though the L-B formula
must be replaced by something more complicated). Note that even if we
choose to adiabatically turn on the coupling
between the semi-infinite leads, the result will be the same. 
A rigorous proof of this fact is in preparation
\cite{CNZ}.  

The second approach is the one in which the leads ({\it long, but
  finite}) are already coupled with the sample, and at $t=-\infty$ the
full system is in a Gibbs 
equilibrium state at a given temperature and chemical potential. Then
we adiabatically turn on a potential bias between the leads, modeling
in this way a gradual appearance of a difference in the chemical
potentials. The statistical density matrix is found as the solution of
a quantum Liouville equation. The current coming out of a given lead is defined to
be the time derivative at $t=0$ of its mean charge. Then one performs 
the linear response approximation with respect to the bias thus obtaining a Kubo-like
formula \cite{BGKS}, {{} and finally} the thermodynamic and adiabatic limits. The
current is given by the same L-B formula, specialized to the linear
response case. Note that in contrast to the previously discussed 
approach, the perturbation introduced by the electrical bias
is not spatially localized, and this makes the
adiabatic limit for the full state (i.e. without the linear response
approximation) rather difficult. 

Some significant papers from the physics literature which initiated the
second approach are \cite{Cini}, \cite{Ste}, \cite{SA}, \cite{FL}, \cite{LA} and \cite{BS}. 
They contain
many nice and interesting physical ideas, even 
for systems which allow local self-interactions, 
but with no real mathematical substance, also due to the fact that
many techniques were not yet available at {{} this} time. 

A first
mathematically sound derivation of the L-B formula on a discrete model
was obtained in \cite{CJM1} and further investigated in
\cite{CJM2}. In the current paper we greatly improve the method of proof of
\cite{CJM1}, which also allows us to extend the results to the continuous
case. 

There are many interesting and hard questions which remain to be
answered. Namely, can one compute the adiabatic limit in the second
approach without the linear response approximation? If yes, then 
what is the connection with the first approach?
Are the two steady states identical? If not, on which class of
observables do they have equal expectations? Can one say anything rigorous about transient currents 
(see \cite{MGM} and references therein)?

A distinct issue is the time dependent coupling and/or bias
introduction, eventually periodic in time, closely related to the
pumping problem. We here mainly refer to the so called BPT formula
\cite{BPT}, rigorously investigated and justified in a series of
papers by Avron {\it et al} \cite{Avron1}, \cite{Avron2}. We also note
that Landauer type formulas are universal in a way, and can be met in
various other contexts like for example in the viscosity experienced by a
topological swimmer \cite{AGO}.

The paper is organized as follows. 

Section 2 introduces the mathematical model, and derives a current
formula \eqref{curent-2} via linear response theory, finite 
volume regularization and adiabatic switching of the electrical bias between leads.

Section 3 contains the proof of the thermodynamic limit (i.e. the
{{} length} of the leads is taken to infinity). The new current formula is
given in \eqref{curent-22}. 

Section 4 contains the adiabatic limit of the current, and makes the connection with
the stationary scattering objects in \eqref{prima55}. 

In Section 5 we derive the usual Landauer-B\"uttiker formula from
\eqref{prima55}; we state our main results in Theorem
\ref{teoremaprincipala}. We also derive a continuity equation in
subsection 5.1, which shows that the steady current is the same, no matter
where we measure it on a given lead.

\section{The finite volume regularization}

Since our results can be easily extended to one and two dimensions, we
will only consider the three dimensional case. The small device through
which the current flows is modelled by a bounded domain of $\R^3$, not
{{} necessarily} simply connected, but with a regular enough boundary. This is
the "sample", which is linked to a finite number of "leads" (which can
be cylinders of {{} length} $L$ and radius $1$). Let us
for simplicity only consider two leads, both cylinders being parallel
to the first coordinate axis and located in $-L-a<x_1<-a$ (the left
lead) and $a<x_1<L+a$ (the right lead). Here $a$ is a nonnegative
number. The sample is a subset of the slab $-a<x_1<a$. Note that one
possible system is just the union of the two
leads, when $a=0$ and the sample is absent. If $a>0$ we demand
the sample to be smoothly pasted to the leads, so that the boundary
of their union is at least $C^2$.

\subsection{The equilibrium state}

The union between the leads and sample will be denoted by
$\mathcal{X}_L$. The one particle Hamilton operator is
given by $\mathbf{H}_L:=-\Delta_L +w$, where $-\Delta_L$ is the
Laplace operator with Dirichlet boundary conditions in
$\mathcal{X}_L$. The potential $w$ is smooth and compactly supported
in the region where the sample is located, i.e. $\mathcal{X}_L\cap
\{\x\in \R^3:\; -a<x_1<a\}$. Without loss of generality, we put $w\geq 0$. 
The one particle Hilbert space is $\mathcal{H}^L:=L^2(\mathcal{X}_L)$.

Denote by $\mathbb{H}_D(\mathcal{X}_L):=H^1_0(\mathcal{X}_L)\cap
H^2(\mathcal{X}_L)$, where 
$H^1_0(\mathcal{X}_L)$ and $H^2(\mathcal{X}_L)$ are the usual Sobolev
spaces on the open domain 
$\mathcal{X}_L\subset\R^3$. 
Since we assumed enough regularity on the boundary of $\mathcal{X}_L$,
the operators:
\begin{align}
& \label{Dir-Lapl}
-\Delta_L :\mathbb{H}_D(\mathcal{X}_L)\rightarrow\mathcal{H}^L \\
&\label{Hamilt}
\mathbf{H}_L:=-\Delta_L\,+\,w(Q):\,\mathbb{H}_D(\mathcal{X}_L)\rightarrow\mathcal{H}^L
\end{align} 
are self-adjoint on $\mathbb{H}_D(\mathcal{X}_L)$. It is also very well
known that their spectrum is purely discrete and accumulates to
$+\infty$. Due to our assumptions on $w$, the Hamiltonian $\mathbf{H}_L$ in
(\ref{Hamilt}) is a positive operator and its resolvent is denoted by
\begin{equation}\label{rez}
 \mathbf{R}_L(z)\,:=\,(\mathbf{H}_L\,-\,z)^{-1}
\end{equation} 
for all $z\in\mathbb{C}\setminus\mathbb{R}_+$; we shall simply denote 
$\mathbf{R}_L:=\mathbf{R}_L(-1)$. 

We only consider the grand-canonical ensemble. In the remote past, 
$t\rightarrow-\infty$, the
electron gas is in thermodynamic equilibrium at a 
temperature $T=\frac{1}{\beta}>0$ and a chemical potential $\mu$. The appropriate
framework is the second quantization. Here the Hilbert space is 
$\mathfrak{F}_L:=\underset{n\in\mathbb{N}}{\oplus}(\mathcal{H}^L)^{\wedge
  n}$, where $(\mathcal{H}^L)^{\wedge n}$ is the $n$ times
{{} antisymmetric} tensor product of 
$\mathcal{H}^L$ with itself. Let
$\boldsymbol{\mathfrak{n}}_L=d\Gamma(\boldsymbol{1}_L)$ be the
number operator (where $\boldsymbol{1}_L$ is the identity operator on $\mathcal{H}^L$).

The second quantization of $\boldsymbol{H}_L$ is denoted by 
$\boldsymbol{\mathfrak{h}}_L:=d\Gamma(\mathbf{H}_L)$. Let 
$$
\boldsymbol{\mathfrak{u}}_L(t)\,:=\,\Gamma(e^{-it\mathbf{H}_L})\,=\,e^{-it\boldsymbol{\mathfrak{h}}_L}
$$
be the associated {{} time-}evolution  of the system. Under our conditions, the
operator $e^{-\beta \mathbf{H}_L}\in\mathbb{B}_1[\mathcal{H}^L]$, the ideal
of trace-class operators on $\mathcal{H}^L$, and this property extends
itself to the second quantized operators.

The associated Gibbs equilibrium state is:
\begin{equation}\label{G-state}
\mathfrak{p}_L:=\frac{e^{-\beta(\boldsymbol{\mathfrak{h}}_L-\mu\boldsymbol{\mathfrak{n}}_L)}}
{\Tr\left \{e^{-\beta(\boldsymbol{\mathfrak{h}}_L-\mu\boldsymbol{\mathfrak{n}}_L)}\right\}}.
\end{equation}
It is a standard fact (see Proposition 5.2.23 in \cite{brarob}) that if we denote by 
$\rho(\lambda):=\left(e^{\beta(\lambda-\mu)}+1\right)^{-1}$, then for
any bounded operator 
$\mathbf{T}\in\mathbb{B}[\mathcal{H}^L]$ we have:
\begin{equation}\label{prima2}
\Tr_{\mathfrak{F}_L}\left\{\mathfrak{p}_L\,d\Gamma(\mathbf{T})\right\}\,=\,
\Tr_{\mathcal{H}^L}\left\{\rho(\mathbf{H}_L)\,
\mathbf{T}\right\}.
\end{equation}

\subsection{Turning on the bias}

Let us introduce three projections defined by natural restrictions on
the left lead, sample, and right lead respectively:
\begin{align}\label{adoua2}
& \Pi_-:\mathcal{H}^L\rightarrow L^2(\mathcal{X}_L\cap \{\x\in\R^3:\;
-L-a<x_1<-a\}), 
\nonumber \\
& \Pi_+:\mathcal{H}^L\rightarrow L^2(\mathcal{X}_L\cap \{\x\in\R^3:\; a<x_1<L+a\}),
\nonumber \\
&\Pi_0:\mathcal{H}^L\rightarrow L^2(\mathcal{X}_L\cap \{\x\in\R^3:\; -a<x_1<a\}).
\end{align}
If $v_\pm\in\R$, define the bias between leads as 
\begin{equation}\label{biasul}
V:=v_-\Pi_- + v_+\Pi_+.
\end{equation}

We will now consider the Hamiltonian describing the evolution under
an adiabatic introduction of the electric bias. Consider a smooth switch-on
function $\chi$ which fulfills 
$\chi,\chi'\in L^1((-\infty,0))$, $\chi(0)=1$ (note that the first two
conditions imply $\lim_{t\to -\infty}\chi(t)=0$). Let $\eta>0$ be the
adiabatic parameter, and define $\chi_\eta(t):=\chi(\eta t)$, 
$V_\eta(t,x):=\chi_\eta(t)V(x)$. At the one-body level we have
\begin{equation}\label{atreia2}
\mathbf{K}_{L,\eta}(t)\,:=\,\mathbf{H}_L+V_\eta(t,Q)\,=\,\mathbf{H}_L+\chi(\eta t)V(Q)
\end{equation} 
where $V(Q)$ denotes the bounded self-adjoint operator of
multiplication with the step 
function $V(x)$. We shall use the notation 
$\mathbf{K}_{L,\eta}:=\mathbf{K}_{L,\eta}(0)=\mathbf{H}_L+V(Q)$.
The evolution defined by this time-dependent Hamiltonian is described
by a unitary propagator $W(t,s)$, strong solution on the domain of
$\mathbf{H}_L$ of the following Cauchy problem:
\begin{equation}\label{W}
\left\{
 \begin{array}{l}
  -i\partial_tW_{L,\eta}(t,s)=-\mathbf{K}_{L,\eta}(t)W_{L,\eta}(t,s)\\
  W_{L,\eta}(s,s)=1
 \end{array}
\right.
\end{equation} 
for $(s,t)\in\mathbb{R}^2$.

\begin{remark}\label{domeniuinvariant}
 For any $L<\infty$ and $\eta>0$, the operators 
$\{\mathbf{K}_{L,\eta}(t)\}_{t\in\mathbb{R}}$ are self-adjoint in
$\mathcal{H}^L$, having a common domain equal to 
$\mathbb{H}_D(\mathcal{X}_L)$. They are strongly differentiable with respect to
$t\in\mathbb{R}$ with a bounded self-adjoint derivative 
$$
\partial_t\mathbf{K}_{L,\eta}(t)\,=\,\eta\,\chi(\eta t)V(Q).
$$
\end{remark}

The following result is standard and we state it without proof:
\begin{proposition}
 The problem \eqref{W} has a unique solution which is unitary and leaves
 the domain $\mathbb{H}_D(\mathcal{X}_L)$ invariant for any pair
 $(s,t)\in\mathbb{R}^2$. For any triple $(s,r,t)\in\mathbb{R}^3$ it 
 satisfies the relation
 $W_{L,\eta}(t,r)W_{L,\eta}(r,s)=W_{L,\eta}(t,s)$. Moreover it also satisfies the equation 
\begin{equation}\label{apatra2}
-i\partial_sW_{L,\eta}(t,s)=W_{L,\eta}(t,s)\mathbf{K}_{L,\eta}(s).
\end{equation}
\end{proposition}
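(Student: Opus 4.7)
My plan is to apply the standard theorem of Kato on evolution families generated by self-adjoint operators sharing a common, time-independent domain. Remark \ref{domeniuinvariant} already supplies the essential regularity: all $\mathbf{K}_{L,\eta}(t)$ are self-adjoint on the common domain $\mathbb{H}_D(\mathcal{X}_L)$, and $t\mapsto \mathbf{K}_{L,\eta}(t)$ is norm differentiable as a map into $\mathbb{B}(\mathbb{H}_D(\mathcal{X}_L),\mathcal{H}^L)$ with uniformly bounded derivative, because $V(Q)$ is bounded and $\chi$ is $C^1$. This is precisely the hypothesis for Kato's theorem on time-dependent generators with constant domain, which directly yields existence, uniqueness, unitarity, domain invariance, and the forward equation \eqref{W}.

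For a more self-contained construction one can work in the interaction picture: set $\tilde{W}(t,s):=e^{it\mathbf{H}_L}W_{L,\eta}(t,s)e^{-is\mathbf{H}_L}$, which should satisfy $i\partial_t\tilde{W}(t,s)=\tilde{V}_\eta(t)\,\tilde{W}(t,s)$ with $\tilde{V}_\eta(t):=\chi(\eta t)\,e^{it\mathbf{H}_L}V(Q)e^{-it\mathbf{H}_L}$. Since $\|\tilde{V}_\eta(t)\|\leq \|V(Q)\|\,\|\chi\|_\infty$ uniformly in $t$, the Dyson series $\tilde{W}(t,s)=\mathbf{1}+\sum_{n\geq 1}(-i)^n\int_s^t\!\int_s^{t_1}\cdots\int_s^{t_{n-1}}\tilde{V}_\eta(t_1)\cdots\tilde{V}_\eta(t_n)\,dt_n\cdots dt_1$ converges in operator norm, uniformly on compact sets in $(s,t)$, and produces a unitary operator since each $\tilde{V}_\eta(t)$ is self-adjoint. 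Returning to $W_{L,\eta}(t,s)=e^{-it\mathbf{H}_L}\tilde{W}(t,s)e^{is\mathbf{H}_L}$ gives unitarity at once, and iterating the integral equation $W_{L,\eta}(t,s)=e^{-i(t-s)\mathbf{H}_L}-i\int_s^t e^{-i(t-\tau)\mathbf{H}_L}\chi(\eta\tau)V(Q)W_{L,\eta}(\tau,s)\,d\tau$ together with the $C^1$ regularity of $\mathbf{K}_{L,\eta}(\cdot)$ on $\mathbb{H}_D(\mathcal{X}_L)$ yields the domain-preservation statement.

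The cocycle identity $W_{L,\eta}(t,r)W_{L,\eta}(r,s)=W_{L,\eta}(t,s)$ then follows from uniqueness in \eqref{W}: with $r$ and $s$ fixed, both sides, viewed as functions of $t$, solve the same Cauchy problem with initial value $W_{L,\eta}(r,s)$ at $t=r$. The backward equation \eqref{apatra2} is derived by differentiating the identity $W_{L,\eta}(t,s)W_{L,\eta}(s,t)=\mathbf{1}$ in $s$, substituting the forward equation \eqref{W} for $W_{L,\eta}(s,t)$, multiplying on the right by $W_{L,\eta}(t,s)$, and reading off $\partial_s W_{L,\eta}(t,s)=iW_{L,\eta}(t,s)\mathbf{K}_{L,\eta}(s)$ on $\mathbb{H}_D(\mathcal{X}_L)$.

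The only genuinely delicate point is the domain invariance, which is where Kato's theorem truly does work; all the other items (unitarity, cocycle, backward equation) reduce to the algebra of the Dyson expansion together with elementary Gronwall-type estimates.
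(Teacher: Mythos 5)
The paper declares this proposition ``standard'' and deliberately states it without proof, so there is no proof of record to compare against. Your argument is correct and is precisely the standard route the authors allude to. The invocation of Kato's theorem on evolution families with common, time-independent domain is apt: Remark~\ref{domeniuinvariant} supplies exactly the needed hypotheses (self-adjointness on a fixed domain $\mathbb{H}_D(\mathcal{X}_L)$, bounded strong derivative $\eta\chi'(\eta t)V(Q)$), and existence, uniqueness, unitarity, domain invariance, and the forward equation all follow. The interaction-picture Dyson expansion gives a clean self-contained proof of existence, uniqueness and unitarity, and the derivation of the backward equation \eqref{apatra2} by differentiating $W_{L,\eta}(t,s)W_{L,\eta}(s,t)=\mathbf{1}$ in $s$ and substituting \eqref{W} is exactly right (one checks the signs: $\partial_s W(s,t)=-i\mathbf{K}_{L,\eta}(s)W(s,t)$ leads to $-i\partial_s W(t,s)=W(t,s)\mathbf{K}_{L,\eta}(s)$).

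One point deserves a sharper statement than you give it: domain invariance does \emph{not} follow from iterating the Duhamel integral equation or from norm convergence of the Dyson series, since boundedness of $W_{L,\eta}(t,s)$ says nothing about whether it maps $\mathbb{H}_D(\mathcal{X}_L)$ into itself. The genuine input is a Gronwall-type bound on $\mathbf{H}_L W_{L,\eta}(t,s)(\mathbf{H}_L+1)^{-1}$, obtained by solving an auxiliary ODE for the commutator $[\mathbf{H}_L,W_{L,\eta}(t,s)]$; this is what Kato's theorem packages, and it is also exactly the mechanism reproduced in the paper's Proposition~\ref{est-HW} to get the uniform-in-$(s,t)$ estimates needed later. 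You flag this correctly as ``the only genuinely delicate point,'' but the phrase ``iterating the integral equation \ldots\ yields the domain-preservation statement'' overstates what the iteration alone delivers; it is the commutator estimate, not the Dyson series, that does the work.
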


Later on we will need the following simple but important result:
\begin{proposition}\label{est-HW}
 Let us define the commutator
 $$[\mathbf{H}_L,W_{L,\eta}(t,s)]=\mathbf{H}_LW_{L,\eta}(t,s)-W_{L,\eta}(t,s)\mathbf{H}_L$$
on the dense domain $\mathbb{H}_D(\mathcal{X}_L)$ (left invariant by the
evolution). Then the commutator can be extended by continuity to a 
bounded operator on $\mathcal{H}^L$ such that: 
\begin{align}\label{Omega-inv}
& \sup_{(s,t)\in\mathbb{R}_-^2}\left \Vert
  [\mathbf{H}_L,W_{L,\eta}(t,s)]\right \Vert <\infty, \\
\label{apaispea2}& \sup_{(s,t)\in\mathbb{R}_-^2}
\left \Vert (\mathbf{H}_L+1)W_{L,\eta}(t,s)(\mathbf{H}_L+1)^{-1}\right \Vert <\infty.
\end{align}
\end{proposition}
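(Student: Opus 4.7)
The plan is to derive the explicit identity
$$
[\mathbf{H}_L, W_{L,\eta}(t,s)] = -\chi(\eta t)\,V\,W_{L,\eta}(t,s) + \chi(\eta s)\,W_{L,\eta}(t,s)\,V + \eta\int_s^t \chi'(\eta\tau)\,W_{L,\eta}(t,\tau)\,V\,W_{L,\eta}(\tau,s)\,d\tau \qquad (\star)
$$
on the dense domain $\mathbb{H}_D(\mathcal{X}_L)$; its right-hand side is manifestly a bounded operator on $\mathcal{H}^L$, which will yield \eqref{Omega-inv}. The naive starting point: setting $\phi(t) := W_{L,\eta}(t,s)\psi$ for $\psi \in \mathbb{H}_D(\mathcal{X}_L)$, one formally finds that $\mathbf{H}_L\phi(t)$ solves an inhomogeneous Schr\"odinger equation with source $\chi(\eta t)[\mathbf{H}_L, V]\phi(t)$, and Duhamel's principle then gives
$$
[\mathbf{H}_L, W_{L,\eta}(t,s)] = -i\int_s^t W_{L,\eta}(t,\tau)\,\chi(\eta\tau)\,[\mathbf{H}_L, V]\,W_{L,\eta}(\tau,s)\,d\tau.
$$
The hard part is that $[\mathbf{H}_L, V] = [-\Delta_L, V]$ is not a bounded operator: since $V$ is a step-function with jumps at $x_1 = \pm a$, its distributional gradient contributes delta measures on those interfaces, so the Duhamel formula above makes literal sense only as a sesquilinear form on $\mathbb{H}_D(\mathcal{X}_L) \times \mathbb{H}_D(\mathcal{X}_L)$.

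To dispose of the singular object, I would exploit $[V, V] = 0$ to write $[\mathbf{H}_L, V] = [\mathbf{K}_{L,\eta}(\tau), V]$ for any $\tau$; then equations \eqref{W} and \eqref{apatra2} allow one to recognise the Duhamel integrand as a total time-derivative,
$$
W_{L,\eta}(t,\tau)\,[\mathbf{H}_L, V]\,W_{L,\eta}(\tau,s) = -i\,\partial_\tau\bigl(W_{L,\eta}(t,\tau)\,V\,W_{L,\eta}(\tau,s)\bigr),
$$
again at the sesquilinear-form level on $\mathbb{H}_D \times \mathbb{H}_D$. Integration by parts in $\tau$ against $\chi(\eta\tau)$, with boundary contributions at $\tau = t,s$ evaluated using $W_{L,\eta}(\tau,\tau) = \mathbf{1}$, produces $(\star)$. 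A clean way to make this rigorous is to approximate $V$ by smooth cutoffs $V_n$ (obtained by mollifying near the interfaces $x_1 = \pm a$), apply the derivation for each $V_n$ where $[\mathbf{H}_L, V_n]$ is a genuine operator on $\mathbb{H}_D$, and pass to the limit: the associated propagators $W_{L,\eta,n}(t,s)$ converge strongly, and the right-hand side of $(\star)$ depends continuously on $V$ in the strong operator topology, so $(\star)$ persists in the limit.

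Granted $(\star)$, the bound \eqref{Omega-inv} is immediate. Indeed, $V$ is bounded with $\|V\| \le \max(|v_-|,|v_+|)$, $W_{L,\eta}$ is unitary, and $\chi$ is uniformly bounded on $(-\infty,0]$ (from $\chi(0)=1$ and $\chi' \in L^1$ one has $|\chi(t)| \le 1 + \|\chi'\|_{L^1(-\infty,0)}$), so the two boundary terms in $(\star)$ are each bounded by $(1+\|\chi'\|_{L^1})\|V\|$; the integral term is controlled by $\|V\|\int_{\eta s}^{\eta t}|\chi'(u)|\,du \le \|V\|\,\|\chi'\|_{L^1(-\infty,0)}$ after the substitution $u = \eta\tau$, and all three estimates are uniform in $(s,t)\in \mathbb{R}_-^2$. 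Finally, \eqref{apaispea2} follows by writing
$$
(\mathbf{H}_L+1)\,W_{L,\eta}(t,s)\,(\mathbf{H}_L+1)^{-1} = W_{L,\eta}(t,s) + [\mathbf{H}_L, W_{L,\eta}(t,s)]\,(\mathbf{H}_L+1)^{-1}
$$
and using $\|(\mathbf{H}_L+1)^{-1}\| \le 1$, which holds since $\mathbf{H}_L \ge 0$.
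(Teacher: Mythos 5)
Your argument is correct, and your formula $(\star)$ does hold: it is in fact an unpacked version of the paper's implicit representation. But the route to it differs in a way worth noting. You begin with the ``naive'' Duhamel formula $[\mathbf{H}_L,W_{L,\eta}(t,s)]=-i\int_s^t\chi(\eta\tau)W_{L,\eta}(t,\tau)[\mathbf{H}_L,V]W_{L,\eta}(\tau,s)\,d\tau$, which carries the singular commutator $[\mathbf{H}_L,V]$, and then you integrate by parts in $\tau$ (using $[\mathbf{K}_{L,\eta}(\tau),V]=[\mathbf{H}_L,V]$ to identify the integrand as $-i\chi(\eta\tau)\,\partial_\tau\bigl(W_{L,\eta}(t,\tau)VW_{L,\eta}(\tau,s)\bigr)$) so that the derivative lands on $\chi$ instead; you then need a mollification of $V$ to make this manipulation rigorous. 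The paper sidesteps the singular object entirely by a neat change of observable: it studies $\phi(s):=[\mathbf{K}_{L,\eta}(s),W_{L,\eta}(s,t)]\phi$ rather than $[\mathbf{H}_L,W_{L,\eta}(s,t)]\phi$. Differentiating this in $s$, the term $[\mathbf{K}_{L,\eta}(s),\partial_s W_{L,\eta}(s,t)]=-i\mathbf{K}_{L,\eta}(s)[\mathbf{K}_{L,\eta}(s),W_{L,\eta}(s,t)]$ produces only a copy of $\phi(s)$ and no $[\mathbf{H}_L,V]$, while the other term $[\partial_s\mathbf{K}_{L,\eta}(s),W_{L,\eta}(s,t)]=\eta\chi'(\eta s)[V,W_{L,\eta}(s,t)]$ is already bounded; Duhamel then gives $\phi(s)=\int_t^s W_{L,\eta}(s,r)\,\eta\chi'(\eta r)[V,W_{L,\eta}(r,t)]\phi\,dr$ directly, and one passes back to $[\mathbf{H}_L,W]$ by subtracting the bounded term $\chi(\eta s)[V,W_{L,\eta}(s,t)]$. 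So the two derivations are equivalent after unraveling (your boundary terms $-\chi(\eta t)VW+\chi(\eta s)WV$ are exactly what the subtraction and the evaluation of the paper's integral produce), but the paper's choice of commuting against $\mathbf{K}_{L,\eta}(s)$ makes the $\chi'$ appear for free and avoids both the singular intermediate object and the approximation/limit step that you need; the only price it pays is a routine justification of weak differentiability in $\mathbb{H}_D(\mathcal{X}_L)^*$. Both routes then close identically: \eqref{Omega-inv} by $\|V\|$, $\|W\|=1$, boundedness of $\chi$ and integrability of $\chi'$, and \eqref{apaispea2} from the algebraic identity you wrote in your last display.
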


\begin{proof}
 Let us fix some $\phi$ in $\mathbb{H}_D(\mathcal{X}_L)$, and for
 $t\leq 0$ let us consider the application
\begin{align}\label{asaptea2}
&(-\infty,t)\ni s\mapsto
\phi(s):=[\mathbf{K}_{L,\eta}(s),W_{L,\eta}(s,t)]\phi\in
L^2(\mathcal{X}_L), \nonumber \\
& \mathbf{K}_{L,\eta}(s)\phi(s)\subset \mathbb{H}_D (\mathcal{X}_L)^*,
\end{align}
with $\phi(t)=0$. Due to these properties, the function
$\phi(s)$ is differentiable in the weak topology on 
$\mathbb{H}_D(\mathcal{X}_L)^*$ and we can compute its derivative:
\begin{align}\label{asasea2}
\partial_s\phi(s)&=[\partial_s\mathbf{K}_{L,\eta}(s),W_{L,\eta}(s,t)]\phi
+ 
[\mathbf{K}_{L,\eta}(s),\partial_s W_{L,\eta}(s,t)]\phi \nonumber \\
&=\eta\chi^\prime(\eta s)[V(Q),W_{L,\eta}(s,t)]\phi -i
\mathbf{K}_{L,\eta}(s)
[\mathbf{K}_{L,\eta}(s),W_{L,\eta}(s,t)]\phi\nonumber \\
&=-i \mathbf{K}_{L,\eta}(s)\phi(s)+\psi(s),
\end{align}
where $\psi(s):=\eta\chi\prime(\eta
s)[V(Q),W_{L,\eta}(s,t)]\phi$. Thus, 
taking into account the above solution of the Cauchy problem
(\ref{W}), 
we can write the solution for $\phi(s)$ as
$$
\phi(s)=\int_t^s\,W_{L,\eta}(s,r)\psi(r)\,dr.
$$
Let us recall that $W_{L,\eta}(t,r)$ is unitary, the function
$\chi^\prime$ belongs to 
$L^1(\mathbb{R}_-)$ and $\|V(Q)\|\leq\max\{v_-,v_+\}$ so that
$$
\|\phi(s)\|_{\mathcal{H}^L}\leq\|\chi^\prime\|_{L^1(\mathbb{R}_-)}\max\{v_-,v_+\}\|\phi\|,\qquad
 \forall s\in(-\infty,t), \forall t\in\mathbb{R}_-.
$$
Thus for any pair of functions $\psi$ and $\phi$ in
$\mathbb{H}_D(\mathcal{X}_L)$, 
dense in $\mathcal{H}^L$, we have
$$
|\langle \psi, [\mathbf{K}_{L,\eta}(s),W_{L,\eta}(s,t)]\phi\rangle |
\,\leq\,\|\chi^\prime\|_{L^1(\mathbb{R}_-)}\max\{v_-,v_+\}\|\psi\|\|\phi\|.
$$
After a straightforward density argument, the use of Riesz' representations
theorem, and the fact that $V(Q)$ is bounded, 
we obtain a uniform estimate as in \eqref{Omega-inv} if 
$-\infty<s\leq t\leq 0$. To obtain the same for $-\infty<t\leq s \leq
0$ we remind that
$W_{L,\eta}(s,t)=W_{L,\eta}(t,s)^*$ and for 
$\psi$ and $\phi$ in $\mathbb{H}_D(\mathcal{X}_L)$ we can write (using
the invariance of the domain $\mathbb{H}_D(\mathcal{X}_L)$ under $W_{L,\eta}(t,s)$)
$$
-\langle [\mathbf{H}_L,W_{L,\eta}(s,t)]\phi,\psi\rangle 
=\langle \phi,[\mathbf{H}_L,W_{L,\eta}(t,s)]\psi \rangle,
$$
and we easily get the desired conclusion. Finally, \eqref{apaispea2}
is an easy consequence of \eqref{Omega-inv}. 
\end{proof}

\vspace{0.5cm}

Now let us consider the second quantization of the above time-dependent evolution.
$$
d\Gamma(\mathbf{K}_{L,\eta}(t))=\boldsymbol{\mathfrak{h}}_L+\chi_\eta(t)d\Gamma(V(Q)).
$$
Let us first observe that the perturbation $d\Gamma(V(Q))$ is no 
longer bounded on $\mathfrak{F}_L$. It is rather easy to verify that
the family $\{\mathfrak{w}_{L,\eta}(t,s)\}_{(s,t)\in\mathbb{R}^2}$,
that is well defined by $\mathfrak{w}_{L,\eta}(t,s):=\Gamma(W_{L,\eta}(t,s))$ 
gives the unique (and unitary) solution of the Cauchy problem:
\begin{equation}\label{aopta2}
 \left\{
 \begin{array}{l}
  -i\partial_t\mathfrak{w}_{L,\eta}(t,s)=-d\Gamma\left(\mathbf{K}_{L,\eta}(t)\right)
\mathfrak{w}_{L,\eta}(t,s)\\
  \mathfrak{w}_{L,\eta}(s,s) = 1.
 \end{array}
 \right.
\end{equation}

\subsection{The nonequilibrium state}

In our adiabatic approach, the nonequilibrium state is completely
characterized by a density matrix which solves the Liouville
equation. Its initial condition at $t=-\infty$ is the Gibbs
equilibrium state $\mathfrak{p}_L$, associated to the Hamiltonian 
$\boldsymbol{\mathfrak{h}}_L$ at temperature $\beta^{-1}$ and 
chemical potential $\mu$. We let the state evolve with the adiabatic
time-dependent Hamiltonian $\mathbf{K}_{L,\eta}(t)$ up to $t=0$. Then
we perform the {{} thermodynamic} limit $L\rightarrow\infty$, and finally
we should take the adiabatic limit $\eta\rightarrow 0$. {{} This last step
raises} for the moment some important technical difficulties. We
will avoid them by only considering the \textit{linear response}
behavior, i.e. the linear contribution in the bias $v_+-v_-$. The
proof of the existence of the {{} thermodynamic and }adiabatic limit for the full density
matrix is a challenging open problem.

Now let us formulate and solve the Liouville equation. We first want 
to find a weak solution to the equation:
\begin{equation}\label{anoua2}
 \left\{
 \begin{array}{l}
  i\partial_t
  \mathfrak{p}_{L,\eta}(t)=[d\Gamma\left(\mathbf{K}_{L,\eta}(t)\right),\mathfrak{p}_{L,\eta}(t)], \quad t<0, \\
\lim_{t\to -\infty}\mathfrak{p}_{L,\eta}(t)=\mathfrak{p}_{L}.
 \end{array}
 \right.
\end{equation} 
Let us note that for a fixed $s$, the operator (see also \eqref{aopta2})
$$\mathfrak{w}_{L,\eta}(t,s)\mathfrak{p}_L\mathfrak{w}_{L,\eta}(t,s)^*$$
solves the differential equation, but does not obey the initial condition. What
we do next is to take care of this.  

The state $\mathfrak{p}_L$ being defined in
(\ref{G-state}) commutes with any function of
$\boldsymbol{\mathfrak{h}}_L$. Thus we can write:
\begin{align}\label{azecea2}
&\mathfrak{w}_{L,\eta}(t,s)\mathfrak{p}_L\mathfrak{w}_{L,\eta}(t,s)^*=\mathfrak{w}_{L,\eta}(t,s)e^{i(t-s)
\boldsymbol{\mathfrak{h}}_L}\mathfrak{p}_Le^{-i(t-s)\boldsymbol{\mathfrak{h}}_L}\mathfrak{w}_{L,\eta}(t,s)^*\nonumber
 \\
&=\Gamma\left(W_{L,\eta}(t,s)e^{i(t-s)\mathbf{H}_L}\right)\mathfrak{p}_L
\Gamma\left(W_{L,\eta}(t,s)e^{-i(t-s)\mathbf{H}_L}\right)^*,
\end{align}
and we can consider the two-parameter family of unitaries:
\begin{equation}\label{Omega-L-eta}
 \Omega_{L,\eta}(t,s)\,:=\,W_{L,\eta}(t,s)e^{i(t-s)\mathbf{H}_L},\quad (s,t)\in\mathbb{R}^2.
\end{equation}
We will now show that the solution at time $t$ is given by the
following strong limit (norm limit on each sector of fixed number of particles):
\begin{equation}\label{aunspea2}
 \mathfrak{p}_{L,\eta}(t)\,=\,\underset{s\rightarrow-\infty}{\lim}\Gamma(\Omega_{L,\eta}(t,s))
\mathfrak{p}_L\Gamma(\Omega_{L,\eta}(t,s))^*.
\end{equation}
Let us first show that the limit exists. 
Due to the continuity of the application $\Gamma$, it is enough to
prove the existence of the norm limit 
$\underset{s\rightarrow-\infty}{\lim}\Omega_{L,\eta}(t,s)$ in the one particle sector. For that
we use the following differential equality, valid in strong sense on
the domain of $\mathbf{H}_L$:
\begin{align}\label{adoispea2}
 -i\partial_s\Omega_{L,\eta}(t,s)=W_{L,\eta}(t,s)(\mathbf{K}_{L,\eta}(s)-\mathbf{H}_L)e^{i(t-s)\mathbf{H}_L}.
\end{align} 
If we denote $\tilde{V}(r):=e^{ir\mathbf{H}_L}V(Q)e^{-ir\mathbf{H}_L}$ we observe that 
\begin{equation}\label{atreispea2}
\left\{
\begin{array}{l}
 -i\partial_s\Omega_{L,\eta}(t,s)=\chi(\eta s)\Omega_{L,\eta}(t,s)\tilde{V}(s-t)\\
\Omega(s,s)=1,
\end{array}
\right.
\end{equation} 
and thus
\begin{equation}\label{ec-Omega}
 \Omega_{L,\eta}(t,s)=1\,-\,i\int_s^t\chi(\eta r)\Omega_{L,\eta}(t,r)\tilde{V}(r-t)dr.
\end{equation} 
Since $\Omega_{L,\eta}(t,r)$ is unitary for any pair
$(s,t)\in\mathbb{R}_-^2$, 
$\|\tilde{V}(r-t)\|$ is uniformly bounded in $r$, and $\chi$ is integrable, we conclude that the limit 
\begin{equation}\label{aoptuspea2}
 \underset{s\rightarrow-\infty}{\lim}\Omega_{L,\eta}(t,s)\,=:\,\Omega_{L,\eta}(t)
\end{equation} 
exists in the norm topology of $\mathbb{B}[\mathcal{H}^L]$ and thus
defines a unitary operator. 
\begin{proposition}\label{est-Omega}
 The operator $\Omega_{L,\eta}(t)$ preserves the domain of
 $\mathbf{H}_{L}$, and the densely defined commutator
 $[\mathbf{H}_{L},\Omega_{L,\eta}(t)]$ can be extended to a bounded
 operator on $\mathcal{H}^L$, with a norm which is uniformly bounded in $t\leq 0$. 
\end{proposition}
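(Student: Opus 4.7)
My plan is to reduce the statement to Proposition \ref{est-HW} by exploiting the fact that $e^{i(t-s)\mathbf{H}_L}$ commutes with $\mathbf{H}_L$, and then to transfer the resulting uniform bounds from $\Omega_{L,\eta}(t,s)$ to its norm limit $\Omega_{L,\eta}(t)$ by a closed-graph/weak-limit argument.

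The first step I would take is to establish, at finite $s$, the identity
\begin{equation*}
 [\mathbf{H}_L,\Omega_{L,\eta}(t,s)]\phi \,=\, [\mathbf{H}_L,W_{L,\eta}(t,s)]\,e^{i(t-s)\mathbf{H}_L}\phi,\qquad \phi \in \mathbb{H}_D(\mathcal{X}_L),
\end{equation*}
which is immediate from $\Omega_{L,\eta}(t,s) = W_{L,\eta}(t,s)e^{i(t-s)\mathbf{H}_L}$, the invariance of $\mathbb{H}_D(\mathcal{X}_L)$ under $e^{i(t-s)\mathbf{H}_L}$, and the fact that $e^{i(t-s)\mathbf{H}_L}$ commutes with $\mathbf{H}_L$ on this domain. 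Proposition \ref{est-HW} then supplies $\|[\mathbf{H}_L,\Omega_{L,\eta}(t,s)]\phi\| \leq C\|\phi\|$, uniformly in $(s,t)\in\mathbb{R}_-^2$; equivalently, $\Omega_{L,\eta}(t,s)$ maps $\mathbb{H}_D(\mathcal{X}_L)$ into itself with $\|\mathbf{H}_L\Omega_{L,\eta}(t,s)\phi\| \leq C(\|\mathbf{H}_L\phi\|+\|\phi\|)$.

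Next I will transfer these properties to the limit $\Omega_{L,\eta}(t)$. For fixed $\phi \in \mathbb{H}_D(\mathcal{X}_L)$, \eqref{aoptuspea2} gives the norm convergence $\Omega_{L,\eta}(t,s)\phi \to \Omega_{L,\eta}(t)\phi$, while the family $\{\mathbf{H}_L\Omega_{L,\eta}(t,s)\phi\}_{s\leq 0}$ is bounded in $\mathcal{H}^L$. Extracting a weakly convergent subsequence and invoking the closedness (i.e.\ the weak closedness of the graph) of $\mathbf{H}_L$ forces $\Omega_{L,\eta}(t)\phi \in \mathbb{H}_D(\mathcal{X}_L)$, so the commutator $[\mathbf{H}_L,\Omega_{L,\eta}(t)]$ is well-defined on this dense domain. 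To prove the uniform bound on its extension, I will pair with a test vector $\psi \in \mathbb{H}_D(\mathcal{X}_L)$ and write
\begin{equation*}
 \langle \psi,[\mathbf{H}_L,\Omega_{L,\eta}(t)]\phi\rangle \,=\, \lim_{s\to -\infty}\langle \psi,[\mathbf{H}_L,\Omega_{L,\eta}(t,s)]\phi\rangle,
\end{equation*}
which follows from self-adjointness of $\mathbf{H}_L$ (to move it onto $\psi$) together with the norm convergence of $\Omega_{L,\eta}(t,s)$. The estimate from the first step bounds the modulus of the right-hand side by $C\|\psi\|\|\phi\|$ uniformly in $t\leq 0$, and a density/Riesz representation argument then yields the bounded extension with the claimed uniform bound.

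The point that I expect to require some care is that one cannot pass to the limit $s\to-\infty$ directly in the vector identity of the first step, because $e^{i(t-s)\mathbf{H}_L}\phi$ oscillates and has no strong limit as $s\to-\infty$. The device just described sidesteps this: the vector-valued limit fails, but the numerical matrix element $\langle \psi,[\mathbf{H}_L,W_{L,\eta}(t,s)]\,e^{i(t-s)\mathbf{H}_L}\phi\rangle$ stays uniformly bounded by Proposition \ref{est-HW}, which is exactly what is needed to close the argument.
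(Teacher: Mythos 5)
Your proof is correct and follows essentially the same route as the paper's: both rely on Proposition \ref{est-HW}, the factorization $\Omega_{L,\eta}(t,s)=W_{L,\eta}(t,s)e^{i(t-s)\mathbf{H}_L}$ together with the fact that $e^{i(t-s)\mathbf{H}_L}$ commutes with $\mathbf{H}_L$, the norm convergence $\Omega_{L,\eta}(t,s)\to\Omega_{L,\eta}(t)$, and the closedness of $\mathbf{H}_L$ to pass to the limit in the commutator estimate. The only cosmetic difference is in how domain preservation is concluded: you extract a weakly convergent subsequence of $\mathbf{H}_L\Omega_{L,\eta}(t,s)\phi$ and invoke weak closedness of the graph, whereas the paper instead bounds the sesquilinear form $\psi\mapsto\langle\mathbf{H}_L\psi,\Omega_{L,\eta}(t)(\mathbf{H}_L+1)^{-1}\phi\rangle$ via \eqref{apaispea2} and deduces membership in the domain from self-adjointness — both being the same closedness argument in slightly different clothing.
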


\begin{proof} If $\psi$ belongs to the domain of $\mathbf{H}_L$, then
  for every $\phi\in {{} \mathcal{H}^L}$ we have: 
\begin{align}\label{asaispea2}
&|\langle
\mathbf{H}_L\psi,\Omega_{L,\eta}(t)(\mathbf{H}_L+1)^{-1}\phi\rangle|\nonumber \\
&=
\lim_{s\to -\infty}|\langle
\mathbf{H}_L\psi,W_{L,\eta}(t,s)(\mathbf{H}_L+1)^{-1}e^{i(t-s)\mathbf{H}_L}\phi\rangle|
\leq C||\psi||\; ||\phi||,
\end{align}
where we used \eqref{Omega-L-eta}, \eqref{aoptuspea2} and
\eqref{apaispea2}. Now since $\mathbf{H}_L$ is self-adjoint, it means
that $\Omega_{L,\eta}(t)(\mathbf{H}_L+1)^{-1}\phi$ belongs to the
domain of $\mathbf{H}_L$, hence:
\begin{equation}\label{anouaspea2}
\sup_{t\leq 0}\Vert
\mathbf{H}_L\Omega_{L,\eta}(t)(\mathbf{H}_L+1)^{-1}\Vert \leq C. 
\end{equation}
Now for $\psi$ and $\phi$ in the domain of $\mathbf{H}_L$ we can write
\begin{align}\label{prima3}
&|\langle \psi,
[\mathbf{H}_L,\Omega_{L,\eta}(t)]\phi\rangle|\nonumber \\
&=
\lim_{s\to -\infty}|\langle\psi,
[\mathbf{H}_L,W_{L,\eta}(t,s)]e^{i(t-s)\mathbf{H}_L}\phi\rangle|
\leq C||\psi||\; ||\phi||,
\end{align}
where we used \eqref{Omega-inv}. A density argument finishes the proof.
\end{proof}

\vspace{0.5cm}

Up to now we have shown that $\mathfrak{p}_{L,\eta}$ is a weak
solution to the Liouville equation \eqref{anoua2}. This density matrix
is a trace class operator. The key identity which gives the
expectation of a one-body bounded observable
$\mathbf{T}\in\mathbb{B}(\mathcal{H}^L)$ lifted to the Fock space is
(see also \eqref{prima2}):
\begin{align}\label{adoua3}
\Tr_{\mathfrak{F}_L}\left(\mathfrak{p}_{L,\eta}(t)d\Gamma(\mathbf{T})\right)&=
\Tr_{\mathfrak{F}_L}\left(\mathfrak{p}_L\Gamma(\Omega_{L,\eta}(t))^*d\Gamma(\mathbf{T})
\Gamma(\Omega_{L,\eta}(t))\right)\nonumber \\
&=\Tr_{\mathfrak{F}_L}\left(\mathfrak{p}_L
  d\Gamma\{\Omega_{L,\eta}^*(t)\mathbf{T}\Omega_{L,\eta}(t)\}\right )\nonumber \\
&=\Tr_{\mathcal{H}^L}\left(\Omega_{L,\eta}(t)\rho(\mathbf{H}_L)
\Omega_{L,\eta}^*(t)\mathbf{T} \right).
\end{align}
We stress that all this works because $\mathcal{X}_L$ is finite. The
main conclusion is that if we are only interested in expectations of
one body observables, the effective one-particle density matrix is:
\begin{equation}\label{atreia3}
\rho_{L,\eta}(t):=\Omega_{L,\eta}(t)\rho(\mathbf{H}_L)\Omega_{L,\eta}^*(t)\in\mathbb{B}_1(\mathcal{H}^L).
\end{equation}
We can now prove that the above mapping is differentiable with respect
to $t$ in the trace norm topology. We write:
\begin{align}\label{apatra3}
&\rho_{L,\eta}(t)=
\underset{s\rightarrow-\infty}{\lim}\Omega_{L,\eta}(t,s)\rho(\mathbf{H}_L)\Omega_{L,\eta}^*(t,s)
=\,\underset{s\rightarrow-\infty}{\lim}W_{L,\eta}(t,s)\rho(\mathbf{H}_L)W_{L,\eta}^*(t,s)\nonumber \\ 
&=W_{L,\eta}(t,0)\left\{\underset{s\rightarrow-\infty}{\lim}W_{L,\eta}(0,s)e^{-is\mathbf{H}_L}
\rho(\mathbf{H}_L)e^{is\mathbf{H}_L}W_{L,\eta}^*(0,s)\right\}W_{L,\eta}(t,0)^*\nonumber \\
&=W_{L,\eta}(t,0)\rho_{L,\eta}(0)W_{L,\eta}(t,0)^*=W_{L,\eta}(t,0)\Omega_{L,\eta}(0)\rho(\mathbf{H}_L)
\Omega_{L,\eta}^*(0)W_{L,\eta}^*(t,0).
\end{align}
We have
$\mathbf{H}_L\rho(\mathbf{H}_L)\mathbf{H}_L\in\mathbb{B}_1[\mathcal{H}^L]$
because of the exponential decay of $\rho$. Using \eqref{anouaspea2}
we obtain that $\mathbf{H}_L\rho_{L,\eta}(0)\mathbf{H}_L$ is also
trace class. Then using \eqref{W} and \eqref{apaispea2} we conclude
that $t\mapsto \rho_{L,\eta}(t)\in \mathbb{B}_1[\mathcal{H}^L]$ is
differentiable and:
\begin{equation}\label{acincea3}
\left.\partial_t\,\rho_{L,\eta}(t)\,\right|_{t=0}\,=\,-i\,\left[\mathbf{K}_{L,\eta},\rho_{L,\eta}\right]
\,\in\,\mathbb{B}_1[\mathcal{H}^L].
\end{equation}

As we have said at the {{} beginning} of this subsection, studying the
limits $L\rightarrow\infty$ and $\eta\rightarrow 0$ on the above
formula for the whole state {{} seems to be}  rather difficult, and we will only
consider the first order correction with respect to the potential
bias, obtained by considering the equation (\ref{ec-Omega}) in the limit $s\rightarrow -\infty$:
\begin{align}\label{asasea3}
\Omega_{L,\eta}(t)&=1\,-\,i\int_{-\infty}^t\,\chi(\eta
r)\Omega_{L,\eta}(t,r) 
\tilde{V}(r-t)\,dr\, \nonumber \\
&\sim\,1\,-\,i\int_{-\infty}^t\,\chi(\eta r)\tilde{V}(r-t)\,dr\,+\mathcal{O}(V^2).
\end{align}
{\it Let us point out here that a control of the above rest
  $\mathcal{O}(V^2)$ is a difficult task that we will not consider for the moment.}

Having in mind the above argument, we define as the {\it linear response state at time 0}:
\begin{equation}\label{LR-state}
\tilde{\rho}_{L,\eta}:=\rho(\mathbf{H}_L)-\left[\mathcal{V}_\eta,\rho(\mathbf{H}_L)\right],
\end{equation} 
where:
\begin{equation}\label{LR-Omega}
\mathcal{V}_\eta:=i\int_{-\infty}^0\,\chi(\eta r)\tilde{V}(r)\,dr.
\end{equation}

\subsection{The current}

The main advantage of our approach is that we can define the current
coming out of a lead as the time derivative of its charge. We define
the {\it charge operators} at finite volume, to be the second
quantization of projections $\Pi_\pm$ (see \eqref{adoua2}):
\begin{equation}\label{asaptea3}
\mathfrak{Q}_\pm:=d\Gamma(\Pi_\pm).
\end{equation}
The {\it average charge at time} $t$ is given by:
\begin{equation}\label{aopta3}
q(t):=\Tr_{\mathfrak{F}_L}\left(\mathfrak{p}_{L,\eta}(t)\mathfrak{Q}_\pm\right)=
\Tr_{\mathcal{H}^L}\left(\rho_{L,\eta}(t)\Pi_\pm\right).
\end{equation}
By differentiating with respect to $t$ and using the conclusion of the
previous subsection we obtain the {\it average {{} current} at time} $t=0$:
\begin{align}\label{anoua3}
j_{L,\eta}&=-i\Tr_{\mathcal{H}^L}\left(\left[\mathbf{K}_{L,\eta},\rho_{L,\eta}\right]\,\Pi_\pm\right)\nonumber \\
&=-i\Tr_{\mathcal{H}^L}\left(\left[\mathbf{H}_{L},\rho_{L,\eta}\right]\,\Pi_\pm\right)-
i\Tr_{\mathcal{H}^L}\left(\left[V(Q),\rho_{L,\eta}\right]\,\Pi_\pm\right).
\end{align}
But (see \eqref{biasul}) 
$\Tr_{\mathcal{H}^L}\left(\left[V(Q),\rho_{L,\eta}\right]\,\Pi_\pm\right)={{}-}
\Tr_{\mathcal{H}^L}\left(\rho_{L,\eta}\left[V(Q),\,\Pi_\pm\right]\right)=0$
and we deduce that:
\begin{align}\label{azecea3}
j_{L,\eta}&
=-i\Tr_{\mathcal{H}^L}\left(\left[\mathbf{H}_{L},\rho_{L,\eta}\right]\,\Pi_\pm\right)\nonumber \\
&=-i\Tr_{\mathcal{H}^L}\left((\mathbf{H}_{L}+1)\rho_{L,\eta}
  (\mathbf{H}_{L}+1)\mathbf{R}_L\Pi_\pm\,-\,
\mathbf{R}_L(\mathbf{H}_{L}+1)\rho_{L,\eta}
(\mathbf{H}_{L}+1)\Pi_\pm\right)\nonumber \\
&=-i\Tr_{\mathcal{H}^L}\left((\mathbf{H}_{L}+1)\rho_{L,\eta}
  (\mathbf{H}_{L}+1)\,\left[\mathbf{R}_L,\Pi_\pm\right]\right).
\end{align}
Because we only are interested in the {\it linear response}, we will use
(\ref{LR-state}). Remember that
$\rho(\mathbf{H}_L)\mathbf{H}_L^2\in\mathbb{B}_1[\mathcal{H}^L]$. An
important observation is that the following commutator defined as a 
sesquilinear form on $\mathbb{H}_D(\mathcal{X}_L)^2$, can be extended
to a bounded operator on $L^2(\mathcal{X}_L)$ since:
\begin{align}\label{aunspea3}
\left[\mathbf{H}_L
  ,\mathcal{V}_\eta\right]\,&=\,\int_{-\infty}^0\,\chi(\eta
s)\,\left\{\partial_s\,e^{is\mathbf{H}_L}Ve^{-is\mathbf{H}_L}\right\}\,ds\,\nonumber
\\
&=\,V-\eta\int_{-\infty}^0\,\chi^\prime(\eta s)e^{is\mathbf{H}_L}Ve^{-is\mathbf{H}_L}\,ds.
\end{align}
Note that we do not have to commute $\mathbf{H}_L$ with $V$ in order
to get this result. In fact $[\mathbf{H}_L,V]$ is quite singular due
to the sharp characteristic functions from the definition of $V$. 

Thus the second term in (\ref{LR-state}) is also trace-class and we
can write the 
{\it linear response average {{} current}} at time $t=0$ as:
\begin{align}\label{LR-curent}
 \tilde{j}_{L,\eta}\,&:=\,-i\Tr_{\mathcal{H}^L}\left((\mathbf{H}_{L}+1)\tilde{\rho}_{L,\eta} 
(\mathbf{H}_{L}+1)\,\left[\mathbf{R}_L,\Pi_\pm\right]\right)\nonumber \\
&=\,-i\Tr_{\mathcal{H}^L}\left((\mathbf{H}_{L}+1)\rho(\mathbf{H}_L)(\mathbf{H}_{L}+1)\,
\left[\mathbf{R}_L,\Pi_\pm\right]\right)\nonumber \\
&+i\,\Tr_{\mathcal{H}^L}\left((\mathbf{H}_{L}+1)\left[\mathcal{V}_\eta,\rho(\mathbf{H}_L)\right] 
(\mathbf{H}_{L}+1)\,\left[\mathbf{R}_L,\Pi_\pm\right]\right).
\end{align}
The first term of the last sum is zero due to trace {{} cyclicity}. We then
obtain:
\begin{align}\label{curent}
\tilde{j}_{L,\eta}\,&=\,i\Tr_{\mathcal{H}^L}\left((\mathbf{H}_{L}+1)
\left[\mathcal{V}_\eta,\rho(\mathbf{H}_L)\right]
(\mathbf{H}_{L}+1)\,\left
[\mathbf{R}_L,\Pi_\pm\right]\right)\nonumber \\
&=\,i\Tr_{\mathcal{H}^L}\left((\mathbf{H}_{L}+1)\left[\mathcal{V}_\eta,\mathbf{R}_L\right] 
(\mathbf{H}_{L}+1)\,\left[\rho(\mathbf{H}_L),\Pi_\pm\right]\right),
\end{align} 
where the second equality is obtained by carefully developing the
commutators and using the {{} cyclicity} property of the trace and the fact
that $\mathbf{H}_L^2\rho(\mathbf{H}_L)$ is trace class.

Now we rewrite (\ref{curent}) in a more suitable form for taking the
limits $L\rightarrow\infty$ and $\eta\rightarrow 0$.

We begin by computing the first commutator in (\ref{curent}):
\begin{align}\label{adoispea3}
\left[\mathcal{V}_\eta,\mathbf{R}_L\right] &=\,i\int_{-\infty}^0\,\chi(\eta
r)\left[ \tilde{V}(r),\mathbf{R}_L\right]\,dr\nonumber \\
&=\,i\int_{-\infty}^0\,\chi(\eta r)e^{ir\mathbf{H}_L}\left[ V(Q),\mathbf{R}_L\right]e^{-ir\mathbf{H}_L}\,dr.
\end{align}
Now let us observe that in the strong topology on $\mathbb{B}[\mathcal{H}_L]$ we can write:
$$
\partial_s\,e^{is\mathbf{H}_L} \mathbf{R}_L(z) V(Q) \mathbf{R}_L(z)
e^{-is\mathbf{H}_L}\,=\,i\,e^{is\mathbf{H}_L} 
\left[V(Q) , \mathbf{R}_L(z)\right] e^{-is\mathbf{H}_L},
$$
so that we get:
\begin{align}\label{atreispea3}
\left[\mathcal{V}_\eta,\mathbf{R}_L\right]\,&=\,\int_{-\infty}^0\,\chi(\eta
s)\,\left\{\partial_s\,e^{is\mathbf{H}_L} \mathbf{R}_L V(Q) \mathbf{R}_L
  e^{-is\mathbf{H}_L}\right\}\,ds\nonumber \\
&=\,\mathbf{R}_L V(Q) \mathbf{R}_L\,
-\,\eta\,\int_{-\infty}^0\,\chi^\prime(\eta s)\,e^{is\mathbf{H}_L}
\mathbf{R}_L V(Q) 
\mathbf{R}_L e^{-is\mathbf{H}_L}\,ds.
\end{align}
Inserting the first term in (\ref{curent}), we observe that it gives zero:
\begin{equation}\label{RR}
\Tr_{\mathcal{H}^L}\left(V(Q)\,\left[\rho(\mathbf{H}_L),\Pi_\pm\right]\right)=0,
\end{equation}
due to the trace {{} cyclicity} and the fact that $V(Q)=v_+\Pi_++v_-\Pi_-$
{{} commutes} with $\Pi_\pm$. Thus the final expression is:
\begin{equation}\label{curent-2}
 \tilde{j}_{L,\eta}\,=\,-i\eta\,\int_{-\infty}^0\,\chi^\prime(\eta
 s)\,\Tr_{\mathcal{H}^L}
\left(e^{is\mathbf{H}_L} V(Q) e^{-is\mathbf{H}_L}\left[\rho(\mathbf{H}_L),\Pi_\pm\right]\right)ds.
\end{equation} 

\section{The thermodynamic limit}

\subsection{The limit of the dynamics}

For different values of $L$, the Hamiltonians $\mathbf{H}_L$ are
densely defined in different Hilbert spaces
$\mathcal{H}^L:=L^2(\mathcal{X}_L)$. In order to study the behavior of our system 
at $L\rightarrow\infty$, we embed $\mathcal{H}^L$ in the unique Hilbert space 
$\mathcal{H}:=L^2(\mathcal{X}_\infty)$ (i.e. with infinitely long
leads), using the natural decomposition: 
\begin{equation}\label{last1}
\mathcal{H}=\mathcal{H}^L \oplus L^2(\mathcal{X}_\infty\setminus \mathcal{X}_L).
\end{equation}
We denote by $\Pi_L$ the orthogonal projection corresponding to the
restriction to $\mathcal{X}_L$. Its orthogonal $\Pi_L^\bot$
corresponds to the restriction to $\mathcal{X}_\infty\setminus
\mathcal{X}_L$.  Each operator
$\mathbf{H}_L$ is self-adjoint on the domain $\mathbb{H}_D(\mathcal{X}_L)$ (see \eqref{Hamilt}), while the
resolvent $\mathbf{R}_L(z)$ is a bounded operator. We extend $\mathbf{R}_L(z)$ to $\mathcal{H}$ by the 
natural formula $ \Pi_L \mathbf{R}_L(z)\Pi_L$; this new $z$-dependent operator family is a pseudoresolvent, 
which vanishes on $\Pi_L^\bot\mathcal{H}$. 

Let us denote by 
\begin{equation}\label{atreia44}
\mathbf{H}:=-\Delta+w(Q)
\end{equation}
with $-\Delta$ minus the
usual Laplacian with Dirichlet boundary conditions defined on the
Sobolev space $H^2(\mathcal{X}_\infty)$. We denote by $\mathbf{R}(z)$
the resolvent $(\mathbf{H}-z)^{-1}$ (and with
$\mathbf{R}:=\mathbf{R}(-1)$).

\begin{proposition}\label{s-conv-L}
The sequence of operators $\{\Pi_L\mathbf{R}_L\Pi_L\}_{L>1}$ converges strongly to $\mathbf{R}$.
\end{proposition}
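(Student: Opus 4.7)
The plan is to prove strong convergence on the dense subset $C_c^\infty(\mathcal{X}_\infty)\subset\mathcal{H}$; this suffices because $\mathbf{H}_L\geq 0$ implies $\|\mathbf{R}_L\|\leq 1$, and hence $\|\Pi_L\mathbf{R}_L\Pi_L\|\leq 1$ uniformly in $L$. Fix $\phi\in C_c^\infty(\mathcal{X}_\infty)$ and choose $L_0$ so that $\mathrm{supp}\,\phi\subset\mathcal{X}_{L_0}$. For $L\geq L_0$, set $u_L:=\mathbf{R}_L\phi\in\mathbb{H}_D(\mathcal{X}_L)$, extended by zero to all of $\mathcal{X}_\infty$; this extension lies in $H^1_0(\mathcal{X}_\infty)$ because $u_L$ has vanishing Dirichlet trace on $\partial\mathcal{X}_L$, while the parts of $\partial\mathcal{X}_L$ not shared with $\partial\mathcal{X}_\infty$ are the end-caps of the finite leads, which are interior to $\mathcal{X}_\infty$. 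Let $u:=\mathbf{R}\phi\in\mathbb{H}_D(\mathcal{X}_\infty)$.

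Testing $(\mathbf{H}_L+1)u_L=\phi$ against $u_L$ and integrating by parts yields the energy identity
\[ \|\nabla u_L\|^2+(wu_L,u_L)+\|u_L\|^2=\langle\phi,u_L\rangle\leq\|\phi\|\,\|u_L\|, \]
whence, using $w\geq 0$, a uniform bound $\|u_L\|_{H^1(\mathcal{X}_\infty)}\leq C\|\phi\|$. By weak compactness, every subsequence admits a further weakly convergent subsequence $u_{L_n}\rightharpoonup u^*$ in $H^1_0(\mathcal{X}_\infty)$. To identify $u^*$, test against arbitrary $v\in C_c^\infty(\mathcal{X}_\infty)$---which belongs to $H^1_0(\mathcal{X}_{L_n})$ for $n$ large---and pass to the limit in the weak formulation $(\nabla u_{L_n},\nabla v)+(wu_{L_n},v)+(u_{L_n},v)=(\phi,v)$, obtaining $(\mathbf{H}+1)u^*=\phi$ weakly. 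By uniqueness, $u^*=u$, so the whole net converges weakly to $u$ in $H^1_0(\mathcal{X}_\infty)$.

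To upgrade to the claimed strong $L^2$-convergence, pass to the limit in the energy identity itself: the right-hand side $\langle\phi,u_L\rangle$ tends to $\langle\phi,u\rangle$ by weak $L^2$-convergence, so $\|\nabla u_L\|^2+(wu_L,u_L)+\|u_L\|^2$ converges to $\|\nabla u\|^2+(wu,u)+\|u\|^2$. Since $0\leq w$ is bounded, this quadratic form is an equivalent $H^1$-norm squared, and norm convergence combined with weak $H^1$-convergence yields strong $H^1$-convergence, and \emph{a fortiori} the desired strong $L^2$-convergence $\Pi_L\mathbf{R}_L\Pi_L\phi\to\mathbf{R}\phi$.

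The technical point needing the most care is that the zero-extension of $u_L$ genuinely sits in $H^1_0(\mathcal{X}_\infty)$, which relies on the $C^2$-regularity assumed on $\partial\mathcal{X}_L$ and the smooth pasting of the leads; granted that, the identification of the weak limit through compactly supported test functions is routine. A slicker alternative would be to invoke Simon's monotone-form-convergence theorem applied to the decreasing family $q_L[v]:=\|\nabla v\|^2+(wv,v)$ extended from $H^1_0(\mathcal{X}_L)$ to $\mathcal{H}$ by $+\infty$, whose monotone limit is precisely the form of $\mathbf{H}$; this would deliver strong resolvent convergence immediately.
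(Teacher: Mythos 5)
Your proof is correct, but it takes a genuinely different route from the paper's. The paper's argument is purely operator-theoretic and much shorter: it picks the dense subspace $\mathcal{E}=(\mathbf{H}+1)D_1$, where $D_1\subset H^2(\mathcal{X}_\infty)$ consists of functions with compact support in $x_1$ (and $\mathbf{H}$ is essentially self-adjoint on $D_1$). For $u\in\mathcal{E}$, the vector $\mathbf{R}u$ has compact $x_1$-support, hence for $L$ large it lies in $\mathbb{H}_D(\mathcal{X}_L)$ with $\mathbf{H}_L\mathbf{R}u=\mathbf{H}\mathbf{R}u$, which yields the pseudoresolvent identity
$\Pi_L\mathbf{R}_L\Pi_L u-\mathbf{R}u=\Pi_L\mathbf{R}_L\Pi_L^\bot u$; since $u$ is itself compactly supported in $x_1$, the right-hand side vanishes once $L$ is large enough, so the two operators actually \emph{agree} on $\mathcal{E}$ for large $L$, and the uniform bound $\|\Pi_L\mathbf{R}_L\Pi_L\|\leq 1$ carries the convergence over to all of $\mathcal{H}$ by density. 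By contrast, you run a PDE-compactness argument: an a~priori $H^1$ bound on the zero-extended $u_L=\mathbf{R}_L\phi$ from the energy identity, weak compactness and identification of the weak limit by testing against $C_c^\infty$, then an upgrade from weak to strong $H^1$ convergence by passing to the limit in the energy identity itself. Your version is longer but more robust: it does not rely on finding a clever dense subspace on which $\mathbf{H}$ is essentially self-adjoint, and it would survive changes that would ruin that trick (e.g., unbounded potentials or less convenient domain nesting). The monotone-form-convergence alternative you sketch at the end is also perfectly valid here, since the forms with Dirichlet conditions on $\mathcal{X}_L$ decrease monotonically to the form of $\mathbf{H}$, and it is arguably the slickest of the three.
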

\begin{proof}
Fix $u\in H^2(\mathcal{X}_\infty)$. If $\psi\in C_0^\infty(\R)$,
define $\psi_3(\x):=\psi(x_1)$. If $L$ is large enough such that
${\rm supp}(\psi_3 u)\subset \mathcal{X}_L$, then $\psi_3 u\in
\mathbb{H}_D(\mathcal{X}_L) $. If we denote by $D_1$ the subset of
$H^2(\mathcal{X}_\infty)$ with compact support in the $x_1$ variable,
then $\mathbf{H}$ is essentially self adjoint on it. Then the subspace 
$\mathcal{E}:=(\mathbf{H}+1)D_1$ is dense in the Hilbert space
$\mathcal{H}$. For any $u\in\mathcal{E}$, there exists $L_u>0$ such
that for every $L\geq L_u$ we have $\Pi_L\mathbf{R}u=\mathbf{R}u$, 
$\mathbf{R}u\in\mathbb{H}_D(\mathcal{X}_L)$ and
$\mathbf{H}_L\mathbf{R}u=\mathbf{H}\mathbf{R}u$, and thus we have:
$$
\Pi_L\mathbf{R}_L\Pi_Lu-\mathbf{R}u\,=\,\Pi_L\mathbf{R}_L\left(\Pi_L\mathbf{H}-\mathbf{H}_L\right)\mathbf{R}u=
\Pi_L\mathbf{R}_L\Pi_L^\bot u.
$$
As $\|\mathbf{R}\|\leq1$ and $\|\mathbf{R}_L\|\leq1$ for any $L>0$,
the proof follows after a density argument.
\end{proof}
\begin{corollary}\label{s-conv-H}
The sequence of operators $\{\Pi_Le^{it\mathbf{H}_L}\Pi_L \}_{L>a}$ converges in the
strong topology to $e^{it\mathbf{H}}$, uniformly for $t$ in any compact subset of $\mathbb{R}$.
\end{corollary}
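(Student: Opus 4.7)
The plan is to deduce the corollary from Proposition \ref{s-conv-L} by a Trotter-Kato argument, after lifting everything to a common Hilbert space. Concretely, I would set
\[
\tilde{\mathbf{H}}_L := \mathbf{H}_L \oplus 0
\]
on the decomposition $\mathcal{H} = \Pi_L \mathcal{H} \oplus \Pi_L^\bot \mathcal{H}$ from \eqref{last1}. This is a non-negative self-adjoint operator on the full space $\mathcal{H}$, and its unitary group factorizes by functional calculus as $e^{it\tilde{\mathbf{H}}_L} = e^{it\mathbf{H}_L} \oplus \mathbf{1}$. In particular $\Pi_L e^{it\tilde{\mathbf{H}}_L}\Pi_L = \Pi_L e^{it\mathbf{H}_L}\Pi_L$, so the statement to be proved can equivalently be phrased as $e^{it\tilde{\mathbf{H}}_L} \to e^{it\mathbf{H}}$ strongly on $\mathcal{H}$, uniformly for $t$ in compact intervals (using also that $\Pi_L \to \mathbf{1}$ strongly).

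Next I would verify that $\tilde{\mathbf{H}}_L$ converges to $\mathbf{H}$ in the strong resolvent sense at $z = -1$, which lies in the common resolvent set because all operators involved are non-negative. A direct computation on the decomposition yields
\[
(\tilde{\mathbf{H}}_L + 1)^{-1} = \Pi_L \mathbf{R}_L \Pi_L + \Pi_L^\bot.
\]
For any $u \in \mathcal{H}$, Proposition \ref{s-conv-L} gives $\Pi_L \mathbf{R}_L \Pi_L u \to \mathbf{R} u$, while $\Pi_L^\bot u \to 0$ since $\Pi_L \to \mathbf{1}$ strongly; adding the two shows that $(\tilde{\mathbf{H}}_L + 1)^{-1} \to \mathbf{R}$ strongly on $\mathcal{H}$.

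Finally I would invoke the standard Trotter-Kato theorem, which says that strong resolvent convergence of self-adjoint operators at one point of the common resolvent set implies strong convergence of the associated unitary groups, uniformly for $t$ in compact intervals. For the uniformity, I would note that on the dense subspace $\mathcal{E}$ built in the proof of Proposition \ref{s-conv-L}, one has $\tilde{\mathbf{H}}_L \phi = \mathbf{H} \phi$ for all $L$ sufficiently large, so the curves $t \mapsto e^{it\tilde{\mathbf{H}}_L}\phi$ are equi-Lipschitz in $L$; combined with pointwise convergence and density of $\mathcal{E}$ together with the uniform bound $\|e^{it\tilde{\mathbf{H}}_L}\|\leq 1$, this yields convergence that is uniform in $t$ on compacta. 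The main conceptual obstacle is that the operators $\mathbf{H}_L$ and $\mathbf{H}$ originally live on different Hilbert spaces, precluding a direct comparison of their unitary groups; the trivial $0$-extension resolves this at no cost and reduces the corollary to a textbook statement.
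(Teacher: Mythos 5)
Your proof is correct and reaches the conclusion by a cleaner, more textbook route than the paper's. The paper does not invoke Trotter--Kato as a black box; instead it redoes the relevant part of that theorem by hand. Concretely, the paper fixes a compactly supported $u$, defines the extended functional calculus $\tilde{\psi}(\mathbf{H}_L)=\Pi_L\psi(\mathbf{H}_L)\Pi_L$ for $\psi\in C_\infty(\mathbb{R})$, uses that $C_\infty(\mathbb{R})$ is the norm-closure of the algebra generated by the resolvent functions $\mathfrak{r}_z$ to upgrade the strong convergence of pseudoresolvents (Proposition \ref{s-conv-L}) to strong convergence $\tilde{\psi}(\mathbf{H}_L)\to\psi(\mathbf{H})$, and then handles $e^{it\cdot}\notin C_\infty(\mathbb{R})$ by inserting a spectral cutoff $\psi_E$ and running a three-term estimate, with local uniformity in $t$ extracted from the joint continuity of $(t,\lambda)\mapsto e^{it\lambda}\psi_E(\lambda)$. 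Your $\tilde{\mathbf{H}}_L:=\mathbf{H}_L\oplus 0$ lift is a nice formalization of the ``embed everything in $\mathcal{H}$'' step that the paper performs implicitly via $\Pi_L(\cdot)\Pi_L$; your computation $(\tilde{\mathbf{H}}_L+1)^{-1}=\Pi_L\mathbf{R}_L\Pi_L+\Pi_L^\bot$ together with $\Pi_L^\bot\to 0$ strongly gives exactly the strong resolvent convergence hypothesis of Trotter--Kato, and the conclusion follows. Your approach buys brevity and a ready-made citation; the paper's buys self-containedness.

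One small inaccuracy in your uniformity argument: you invoke the subspace $\mathcal{E}=(\mathbf{H}+1)D_1$ as the set where $\tilde{\mathbf{H}}_L\phi=\mathbf{H}\phi$ for large $L$. That is the wrong subspace: elements of $\mathcal{E}$ are only in $L^2$, not in the domain of $\mathbf{H}$, so $\tilde{\mathbf{H}}_L\phi$ need not even be defined. The dense subspace you actually want is $D_1$ itself (compactly supported $H^2$ functions with Dirichlet boundary data), on which indeed $\tilde{\mathbf{H}}_L\phi=\mathbf{H}\phi$ once $L$ is large enough for the support to fit in $\mathcal{X}_L$, and on which the equi-Lipschitz bound $\|\partial_t e^{it\tilde{\mathbf{H}}_L}\phi\|=\|\mathbf{H}\phi\|$ holds. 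With $D_1$ in place of $\mathcal{E}$, your density-plus-equicontinuity argument (or simply the standard statement of Trotter--Kato, which already yields local uniformity in $t$) is fine.
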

\begin{proof}
Let us fix $u\in\mathcal{H}$ and some $t\in\mathbb{R}$. We can take $u$ with compact support, such that 
$\Pi_Lu=u$ for $L$ large enough. For any
continuous function which vanishes at infinity $\psi\in C_\infty(\mathbb{R})$ consider 
$\psi(\mathbf{H}_L)$ defined on $\mathcal{H}^L$
through functional calculus. We 
define the operator $\tilde{\psi}(\mathbf{H}_L)=\Pi_L\psi(\mathbf{H}_L)\Pi_L$. 
Let us observe that $C_\infty(\mathbb{R})$ is the
norm closure of the algebra defined by the 'resolvent functions' 
$\{\mathfrak{r}_z(t):=(t-z)^{-1}\}_{z\in\mathbb{C}\setminus\mathbb{R}}$
and our definition of $\tilde{\psi}(\mathbf{H}_L)$ agrees with the
definition of the extension of the resolvent $\mathbf{R}_L$. Thus the
strong convergence of the pseudoresolvents for $L\rightarrow\infty$
immediately implies the strong convergence of the operators
$\tilde{\psi}(\mathbf{H}_L)$ to $\psi(\mathbf{H})$ defined by the
usual functional calculus on $\mathcal{H}$. Now let us choose $\psi\in
C_\infty(\mathbb{R})$ such that $\psi(\lambda)=1$ for $|\lambda|\leq 1$ and
$\psi(\lambda)=0$ for 
$|\lambda|\geq 2$. For any $E>0$ let us denote by
$\psi_E(\lambda):=\psi(E^{-1}\lambda)$. Using the above arguments, let
us fix $\epsilon>0$ as small as we want and choose $E>0$ such that 
$\|(1-\psi_E(\mathbf{H}))u\|\leq\epsilon/2$. Then there exists
$L_\epsilon$ large enough such that 
$\|(1-\widetilde{\psi_E}(\mathbf{H}_L))u\|\leq \epsilon$ for any
$L\geq L_\epsilon$. Then 
\begin{align*}
&\|(\Pi_Le^{it\mathbf{H}_L}\Pi_L-e^{it\mathbf{H}})u\|\nonumber \\
&\leq \|(\widetilde{e^{it\cdot}{\psi_E}}(\mathbf{H}_L)-\widetilde{e^{it\cdot }
\psi_E}(\mathbf{H}))u\|+\|(1-\widetilde{\psi_E}(\mathbf{H}_L))u\|+\|(1-\psi_E(\mathbf{H}))u\|.
\end{align*}
For the first term we observe that for any fixed $t\in\mathbb{R}$
the function 
$\lambda\mapsto\psi(\lambda)e^{it\lambda}$ belongs to
$C_0(\mathbb{R})$ and 
thus we can use once again the strong resolvent convergence in order
to control it. Due to the continuity of the map $t\mapsto
e^{it\lambda}$ (at fixed $\lambda$), 
we deduce that we have the strong convergence {{} locally} uniformly in $t\in\mathbb{R}$.
\end{proof}

\subsection{The limit of the current}

 Let us fix some small enough $r>0$ and consider the positively oriented contour: 
\begin{equation}\label{conturint}
\mathcal{C}_r=\left\{\lambda+ir\mid \lambda\in\mathbb{R}_+\right\}\cup\left\{re^{i\theta}
\mid\theta\in[\pi/2,3\pi/2]\right\}\cup\left\{\lambda-ir\mid\lambda\in\mathbb{R}_+\right\}
\subset\mathbb{C}.
\end{equation}
Then by {{} analytic} functional calculus we can write 
(due to the {{} analyticity} and the decay properties of the function $\rho$):
\begin{equation}\label{densimatri}
\rho(\mathbf{H}_L)\,=\,\frac{i}{2\pi}\,\int_{\mathcal{C}_r}\,\rho(z)\,\mathbf{R}_L(z)\,dz.
\end{equation}
All these operators can be extended to the whole Hilbert space by the procedure 
$\Pi_L\rho(\mathbf{H}_L)\Pi_L$, (i.e. when considered in $\mathcal{H}$, $\rho(H_L)$ stands for
 $\Pi_L\rho(\mathbf{H}_L)\Pi_L$)  but for notational simplicity we drop the cut-off projectors.
The main result of this subsection is:
\begin{proposition}\label{convintrasa}
The operators $[\rho(\mathbf{H}_L),\Pi_+]$, $1<L\leq \infty$, are
trace class. Moreover, for every $n>1$ there exists $C>0$ such that for $L>1$ we have:
$$\left \Vert
  [\rho(\mathbf{H}),\Pi_+]-[\rho(\mathbf{H}_L),\Pi_+]\right\Vert _{\mathbb{B}_1(\mathcal{H})}\leq
CL^{-n}.$$ 
\end{proposition}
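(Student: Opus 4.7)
The plan is to apply the analytic functional calculus representation \eqref{densimatri} and write
$$[\rho(\mathbf{H}_L), \Pi_+] = \frac{i}{2\pi} \int_{\mathcal{C}_r} \rho(z)\, [\mathbf{R}_L(z), \Pi_+]\, dz,$$
together with the analogous identity for $\mathbf{H}$, and to reduce both claims to uniform estimates on the resolvent commutator and on the resolvent difference $\mathbf{R}(z) - \mathbf{R}_L(z)$ at each $z \in \mathcal{C}_r$. Since $\mathcal{C}_r$ stays at a uniform distance $r$ from the common (nonnegative) spectrum, Combes–Thomas estimates give exponential decay of the kernels $\mathbf{R}_L(z;x,y)$ and $\mathbf{R}(z;x,y)$ in $|x-y|$, with a rate $\alpha = \alpha(r) > 0$ uniform in $L$ and $z \in \mathcal{C}_r$.

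\textbf{Trace-class property.} Writing $[\mathbf{R}_L(z), \Pi_+] = \Pi_+^\perp \mathbf{R}_L(z) \Pi_+ - \Pi_+ \mathbf{R}_L(z) \Pi_+^\perp$, the kernel of the commutator is supported on pairs $(x,y)$ on opposite sides of the interface $\{x_1 = a\}$ and decays exponentially in $|x-y|$. Replacing $\mathbf{R}_L(z)$ by $\mathbf{R}_L(z)^N$ (using that $\frac{1}{k!}\partial_z^k \mathbf{R}_L(z) = \mathbf{R}_L(z)^{k+1}$ and integrating by parts in the Cauchy integral along $\mathcal{C}_r$, which is legal because $\rho$ and its derivatives decay rapidly on $\mathcal{C}_r$), one can assume $N$ is large enough that the resulting kernel is $C^k$. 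Together with exponential decay and the boundedness of the transverse cross-sections of the leads, this gives trace-class via a standard Sobolev criterion for trace norms of integral operators.

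\textbf{Bound on the difference.} Let $\chi$ be a smooth cutoff supported in $\{|x_1| < L/3\}$ that equals $1$ on a fixed neighborhood of the sample, and let $\psi_L$ be supported in $\{|x_1| < 3L/4\}$ with $\psi_L = 1$ on $\mathrm{supp}(\chi)$. Since $[\,\cdot\,, \Pi_+]$ has a kernel vanishing away from the interface $\{x_1 = a\}$, the exponential decay of the resolvent kernels lets me replace $[\rho(\mathbf{H}), \Pi_+] - [\rho(\mathbf{H}_L), \Pi_+]$ by $\chi \bigl([\rho(\mathbf{H}), \Pi_+] - [\rho(\mathbf{H}_L), \Pi_+]\bigr)\chi$ up to trace-norm errors of order $e^{-cL}$. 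On the sandwiched part I invoke the geometric resolvent identity
$$\chi\bigl(\mathbf{R}_L(z) - \mathbf{R}(z)\bigr)\chi = \chi\, \mathbf{R}(z)\, [\mathbf{H}, \psi_L]\, \mathbf{R}_L(z)\, \chi,$$
whose proof uses only that $\psi_L v_L \in \mathbb{H}_D(\mathcal{X}_\infty)$ whenever $v_L \in \mathbb{H}_D(\mathcal{X}_L)$ (because $\psi_L$ vanishes near the Dirichlet boundary at $x_1 = L + a$) and that $\mathbf{H}$ and $\mathbf{H}_L$ coincide as differential expressions on $\mathrm{supp}(\psi_L)$. Now $[\mathbf{H}, \psi_L]$ is a first-order differential operator supported in the annulus $\{L/2 < |x_1| < 3L/4\}$, at distance at least $L/6$ from $\mathrm{supp}(\chi)$, so Combes–Thomas applied to $\chi\, \mathbf{R}(z)$ and to $\mathbf{R}_L(z)\, \chi$ produces a factor $e^{-\alpha L/6}$ in operator norm. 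Inserting the same smoothing trick as in Step~2 promotes this to a trace-norm bound, which, after integrating along $\mathcal{C}_r$, beats $L^{-n}$ for every $n > 1$.

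\textbf{Main obstacle.} The chief difficulty is promoting the operator-norm or Hilbert–Schmidt bounds that come directly from the Combes–Thomas kernel estimates to genuine trace-norm bounds, while keeping all constants uniform in $L$ and in $z \in \mathcal{C}_r$. This is what forces the insertion of extra resolvent factors to produce the needed regularity of the kernels. A secondary issue is the rigorous implementation of the geometric resolvent identity across two different Hilbert spaces, which is handled through the pseudoresolvent extension $\Pi_L \mathbf{R}_L(z) \Pi_L$ set up in Proposition \ref{s-conv-L}.
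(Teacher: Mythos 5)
Your overall strategy matches the paper's: represent $\rho(\mathbf{H}_L)$ by a contour integral over $\mathcal{C}_r$, integrate by parts in $z$ to gain higher powers of the resolvent, localize with Combes--Thomas weights, and use geometric perturbation theory to control the resolvent difference. The technical implementation differs in two small ways. For the difference term you use a one-sided geometric resolvent identity $\chi(\mathbf{R}_L(z)-\mathbf{R}(z))\chi = \chi\mathbf{R}(z)[\mathbf{H},\psi_L]\mathbf{R}_L(z)\chi$ with a single cutoff $\psi_L$, whereas the paper builds a symmetric parametrix $S_L(z)=\Phi_L\mathbf{R}(z)\Phi_L+\tilde\Phi_L\mathbf{R}_L(z)\tilde\Phi_L$ with $\Phi_L^2+\tilde\Phi_L^2=1$ and obtains $\mathbf{R}_L(z)=S_L(z)-\mathbf{R}_L(z)T_L(z)$; these are equivalent in spirit, and your version is perhaps a touch more economical. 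For the trace-class claim the paper first rewrites $[\mathbf{R}_L(z),\Pi_+]$ as $[\mathbf{R}_L(z),\chi]+\mathbf{R}_L(z)\Pi_+(1-\chi)-\Pi_+(1-\chi)\mathbf{R}_L(z)$ with a \emph{fixed} smooth cutoff $\chi$ near the interface, so that the sharp projection only appears multiplied by a compactly supported function $\Pi_+(1-\chi)$, and then proceeds by factoring each piece as a product of Hilbert--Schmidt operators after pushing exponential weights through resolvents. Your direct kernel-regularity argument via ``a standard Sobolev criterion'' is plausible but less explicit; the paper's Hilbert--Schmidt factorization is the concrete mechanism that turns the Combes--Thomas decay into trace-norm control.

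There is one genuine inaccuracy you should fix. You claim the Combes--Thomas decay rate $\alpha$ is ``uniform in $z\in\mathcal{C}_r$'', and you conclude a factor $e^{-\alpha L/6}$ in operator norm. This is not correct: along $\mathcal{C}_r$ the distance to the spectrum stays at $r$, but the admissible conjugation exponent shrinks as $\Re z\to+\infty$, so the rate degrades. This is exactly why the paper's Lemma~\ref{CTestim} uses weights of the form $e^{\pm\delta d/\lambda}$ with $\lambda=\sqrt{1+\Re(z)^2}$ and an accompanying polynomial loss $\lambda^p$, and why the final integrand bound is of the shape $C\lambda^{p}e^{-cL/\lambda}$ rather than $e^{-cL}$. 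One then leans on the exponential decay of $\rho$ (or $\rho_N$) along $\mathcal{C}_r$: splitting the contour at $\lambda\sim\sqrt{L}$ still yields a bound like $e^{-c\sqrt{L}}$, which beats $L^{-n}$ for every $n$, so your conclusion survives — but the argument as written overstates the exponential gain, and without the $\lambda$-dependence of the rate the interplay with the decay of $\rho$ is hidden. Please make the $z$-dependence of the Combes--Thomas rate explicit and carry the polynomial factor $\lambda^p$ through the estimate.
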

\begin{proof} 
Let $0\leq
\chi(x_1)\leq 1$ be a $C^\infty$ cut-off function such that $\chi=1$ on
$x_1\geq a+2$ and $\chi=0$ on $x_1\leq a+1$. We have $\Pi_+\chi=\chi$, and
the following identity holds:
\begin{align}\label{prima}
[\mathbf{R}_L(z),\Pi_+]=[\mathbf{R}_L(z),\chi]+\mathbf{R}_L(z)\Pi_+
(1-\chi)-\Pi_+ (1-\chi)\mathbf{R}_L(z).
\end{align}
Now choose two functions $\Phi,\tilde{\Phi}\in C^\infty(\mathbb{R}^3)$ such that
$0\leq \Phi,\tilde{\Phi}\leq 1$, $\Phi(\x)=1$ if $|x_1|\leq 1/4$, $\Phi(\x)=0$
if $|x_1|\geq 1/2$, and $\Phi^2(\x)+\tilde{\Phi}^2(\x)=1$ on
$\mathbb{R}^3$. Denote by $\Phi_L(\x):=\Phi(\x/L)$ and 
$\tilde{\Phi}_L(\x):=\tilde{\Phi}(\x/L)$. We have: 
\begin{align}\label{adoua}
&\Phi_L^2(\x)+\tilde{\Phi}_L^2(\x)=1, \; \forall \x\in \mathbb{R}^3, \\
&{\rm supp}(\Phi_L)\subset \{|\x|\leq L/2\},\quad 
{\rm supp}(\tilde{\Phi}_L)\subset \{|\x|\geq L/4\},\label{atreia}\\
&\sup_{\x\in \mathbb{R}^3}\left
  \{|D^\alpha\Phi_L|(\x)+|D^\alpha\tilde{\Phi}_L|(\x)\right \}\leq C(\alpha)L^{-|\alpha|}. 
\end{align}
If
$z\in\mathbb{C}\setminus \mathbb{R}$ define:
\begin{equation}\label{eszet}
S_L(z):=\Phi_L \mathbf{R}(z)\Phi_L+\tilde{\Phi}_L\mathbf{R}_L(z)\tilde{\Phi}_L.
\end{equation}
The range of this operator is included in the domain of $\mathbf{H}_L$
and we can write: 
 \begin{align}\label{pseudoresolventequ}
(\mathbf{H}_L-z)S_L(z)&=1+T_L(z),\\
\label{apatra} T_L(z)=[-2(\nabla \Phi_L)\nabla &
-(\Delta\Phi_L)]\mathbf{R}(z)\Phi_L+
[-2(\nabla \tilde{\Phi}_L)\nabla
-(\Delta\tilde{\Phi}_L)]\mathbf{R}_L(z)\tilde{\Phi}_L.
\end{align}
This leads to a resolvent-like equation:
 \begin{align}\label{acincea}
\mathbf{R}_L(z)=S_L(z)-\mathbf{R}_L(z)T_L(z).
\end{align}
Using standard Combes-Thomas estimates \cite {combes}, \cite{CN}, one can prove
the following lemma (given without proof): 
\begin{lemma}\label{CTestim}
Denote by 
$d(\x) :=\sqrt{1+x_1^2}$. If $z\in \mathcal{C}_r$ (see
\eqref{conturint}) we denote by $\lambda=\sqrt{1+\Re(z)^2}$. Fix
$0\leq \beta_2<\beta_1\leq 1$. Then there exist
$\delta >0$, $p>0$ and $C>0$ such that uniformly in $L>1$ (including
$L=\infty)$) and $z\in \mathcal{C}_r$:
\begin{align}\label{asasea}
&||e^{\pm \frac{\delta d}{\lambda}}D^\alpha\mathbf{R}_L(z)e^{\mp
  \frac{\delta d}{\lambda}}||_{B(L^2)}+||e^{\pm \frac{\delta d}{\lambda}}\mathbf{R}_L(z)e^{\mp
  \frac{\delta d}{\lambda}}||_{B(L^2,L^\infty)}\leq C \lambda^p,\quad |\alpha|\leq 1,\\
&||e^{-\beta_1\frac{\delta
    d}{\lambda}}\mathbf{R}_L(z)e^{\beta_2\frac{\delta d}{\lambda}}||_{HS}\leq
C \lambda^p.\label{asaptea}
\end{align}
\end{lemma}
We are now ready to prove Proposition \ref{convintrasa}. 
The idea is to use \eqref{prima} in \eqref{densimatri} and prove that each term on the
right hand side converge in the trace norm to the operator with 
$L=\infty$. 

Let us first prove that $\rho(\mathbf{H})\Pi_+(1-\chi)\in
\mathbb{B}_1(\mathcal{H})$. Because $\rho$ has an exponential decay at infinity, we
can integrate arbitrary many times by parts in \eqref{densimatri} and replace
$\mathbf{R}(z)$ by $\mathbf{R}^N(z)$, $N\geq 2$: 
\begin{equation}\label{aopta}
\rho(\mathbf{H})\,=\,\frac{i}{2\pi}\,\int_{\mathcal{C}_r}\,\rho_N(z)\,\mathbf{R}^N(z)\,dz,
\end{equation} 
where $\rho_N$ still decays exponentially at infinity. Take $N=2$. Let
us prove that the integrand in \eqref{aopta} becomes trace class if we
multiply it by the localized multiplication operator
$\phi:=\Pi_+(1-\chi)$. Indeed, for $0<\alpha_2<\alpha_1<1$ we have:
\begin{align}\label{anoua}
\mathbf{R}^2(z)\phi=\left \{\mathbf{R}(z)e^{-\alpha_2\frac{\delta
    d}{\lambda}}\right \}\left \{e^{\alpha_2\frac{\delta
    d}{\lambda}}\mathbf{R}(z)e^{-\alpha_1\frac{\delta
    d}{\lambda}}\right \}  e^{\alpha_1\frac{\delta d}{\lambda}}\phi.
\end{align}
The operator $e^{\alpha_1\frac{\delta d}{\lambda}}\phi$ is bounded,
while the other two are Hilbert-Schmidt (see \eqref{asaptea}). Thus:
\begin{align}\label{azecea}
||\mathbf{R}^2(z)\phi||_{\mathbb{B}_1(\mathcal{H})}\leq
C\lambda^p,\quad \forall z\in \mathcal{C}_r.
\end{align}
Now since $\rho_N$ has an exponential decay, the integral with respect
to $z$ defines a trace class operator. 

Let us now prove that for any $n\geq 1$ there exists $C>0$ such that:
\begin{equation}\label{aunspea}
\left \Vert
  \Pi_+(1-\chi)\rho(\mathbf{H})-\Pi_+(1-\chi)\rho(\mathbf{H}_L)\right
\Vert _{\mathbb{B}_1(\mathcal{H})}\leq
CL^{-n}.
\end{equation}
Using $\tilde{\Phi}_L\Pi_+(1-\chi)=0$ and
$\Phi_L\Pi_+(1-\chi)=\Pi_+(1-\chi)$ for $L>L_0$, and introducing equations
\eqref{acincea} and \eqref{eszet} in \eqref{densimatri} we obtain
(denote again by $\phi=\Pi_+(1-\chi)$):
\begin{align}\label{adoispea}
\phi\,\rho(\mathbf{H})-\phi\,\rho(\mathbf{H}_L)=\phi\,\rho(\mathbf{H})(\Phi_L-1)+
\frac{i}{2\pi}\,\int_{\mathcal{C}_r}\,\rho(z)\phi\,\mathbf{R}_L(z)T_L(z) dz.
\end{align}
Both operators on the right hand side are trace class with a fast
decaying trace norm. For the first one we use \eqref{aopta} with $N=2$
and reason as in \eqref{anoua}:
\begin{align}\label{atreispea}
&\phi\,\mathbf{R}^2(z)(1-\Phi_L)\\
&=e^{\alpha_1\frac{\delta d}{\lambda}}\phi 
\left \{e^{-\alpha_1\frac{\delta
    d}{\lambda}}\mathbf{R}(z)e^{\alpha_2\frac{\delta
    d}{\lambda}}\right \} \left \{e^{-\alpha_2\frac{\delta d}{\lambda}}\mathbf{R}(z)e^{\alpha_3\frac{\delta
    d}{\lambda}}\right \}e^{-\alpha_3\frac{\delta
  d}{\lambda}}(1-\Phi_L),\nonumber \\
& \nonumber 0<\alpha_3<\alpha_2<\alpha_1<1.
\end{align}
Again $e^{\alpha_1\frac{\delta d}{\lambda}}\phi$ is bounded, the next
two operators are Hilbert-Schmidt, and the last one has a norm bounded
from above by $e^{-\frac{c L}{\lambda}}$, $c>0$, because of the
support properties of $1-\Phi_L$. Hence 
$$\left \Vert \phi\,\mathbf{R}^2(z)(1-\Phi_L)\right\Vert
_{\mathbb{B}_1(\mathbb{R}^3)}\leq C\lambda^p e^{-\frac{c L}{\lambda}}$$
and after integration with respect to $z$ we obtain a decay as in
\eqref{aunspea} (use the exponential decay of $\rho_N$).

As for the other term in \eqref{adoispea}, we have to integrate by
parts with respect to $z$ in order to obtain products containing three
resolvents. One resolvent is used to bound the momentum operator from
the formula of $T_L$, and the other two for creating two
Hilbert-Schmidt factors. Let us consider one typical term: 
$$\phi\,\mathbf{R}_L(z)\Phi_L'\partial_{x_1}\mathbf{R}_L^2(z).$$
We write: 
\begin{align}\label{apaispea}
&\phi\,\mathbf{R}_L(z)\Phi_L'\partial_{x_1}\mathbf{R}_L^2(z)=\left \{e^{\alpha_1\frac{\delta d}{\lambda}}\phi 
\right \} \left \{e^{-\alpha_1\frac{\delta
    d}{\lambda}}\mathbf{R}_L(z)e^{\alpha_2\frac{\delta
    d}{\lambda}}\right \} \nonumber \\
&\cdot 
\left \{  e^{-\alpha_2\frac{\delta d}{2\lambda}}  \Phi_L'    \right \} 
\left \{e^{-\alpha_2\frac{\delta d}{2\lambda}}\partial_{x_1}\mathbf{R}_L(z)e^{\alpha_2\frac{\delta
    d}{2\lambda}}\right \}e^{-\alpha_2\frac{\delta
  d}{2\lambda}}\mathbf{R}_L(z),\nonumber \\
& 0<\alpha_2<\alpha_1<1.
\end{align}
The first factor in the above product, $e^{\alpha_1\frac{\delta
    d}{\lambda}}\phi $, is bounded. The second and the fifth factors
are Hilbert-Schmidt, while the fourth one is bounded (see again
\eqref{asasea} and \eqref{asaptea}). Now the operator
$e^{-\alpha_2\frac{\delta d}{2\lambda}}  \Phi_L'$ is again bounded by
$e^{-\frac{c L}{\lambda}}$ since $\Phi_L'$ is supported in
$|x_1|>L/4$. Hence the operator in \eqref{apaispea} has an
exponentially small trace norm, and after integration with respect to
$z$ we obtain an estimate as in the right hand side of
\eqref{aunspea}. 

Looking back at \eqref{densimatri} and \eqref{prima} we see that we
also need to prove that $[\rho(\mathbf{H}),\chi]$ is trace class, and  
\begin{equation}\label{acinspea}
\left \Vert [\rho(\mathbf{H}),\chi]-[\rho(\mathbf{H}_L),\chi]
  \right\Vert _{\mathbb{B}_1(\mathcal{H})}\leq CL^{-n}.
\end{equation}
Since (use \eqref{aopta} with $N=2$)
\begin{equation}\label{asaispea}
[\rho(\mathbf{H}),\chi]\,=\,\frac{i}{2\pi}\,\int_{\mathcal{C}_r}\,\rho_2(z)\left
  \{ \mathbf{R}^2(z)[\chi,\mathbf{H}]\mathbf{R}(z)+\mathbf{R}(z)
[\chi,\mathbf{H}]\mathbf{R}^2(z)\right \}\,dz
\end{equation}
we see that a typical contribution to the integrand is an
operator of the form
$\mathbf{R}(z)\chi'\partial_{x_1}\mathbf{R}^2(z)$. We can write ($0<\alpha_1<1$): 
\begin{align}\label{asaptespea}
&\mathbf{R}(z)\chi'\partial_{x_1}\mathbf{R}^2(z)\\
&=\left \{\mathbf{R}(z)e^{-\alpha_1\frac{\delta d}{\lambda}}\right \}
\left \{\chi'
e^{2\alpha_1\frac{\delta d}{\lambda}}\right \}
\left \{e^{-\alpha_1\frac{\delta
      d}{\lambda}}\partial_{x_1}\mathbf{R}(z)e^{\alpha_1\frac{\delta
      d}{\lambda}}\right \}
\left \{e^{-\alpha_1\frac{\delta
    d}{\lambda}}\mathbf{R}(z)\right \} \nonumber
\end{align}
The first and last factors are Hilbert-Schmidt, the second and third
ones are bounded (the support of $\chi'$ is near the origin). Hence
the trace norm is polynomially bounded in $\lambda$, and the
exponential decay of $\rho_2$ will do the rest. 

The proof of \eqref{acinspea} does not contain any new
ingredients. The ideas are the same: introduce \eqref{acincea} in
\eqref{densimatri}, integrate by parts with respect to $z$, propagate
exponential factors over resolvents, and use the
fact that the distance between the support of $\chi'$ and the support
of $\tilde{\Phi}_L$, $1-\Phi_L$ or $\Phi_L'$ is of order $L$. 
\end{proof}

\begin{corollary}\label{limitacurrrent}
The linear response contribution to the current admits the
thermodynamic limit and 
\begin{equation}\label{curent-22}
 \lim_{L\to\infty}\tilde{j}_{L,\eta}=-i\eta\,\int_{-\infty}^0\chi^\prime(\eta
 s)\Tr_{\mathcal{H}}
\left(e^{is\mathbf{H}} V(Q) e^{-is\mathbf{H}}\left[\rho(\mathbf{H}),\Pi_\pm\right]\right)ds.
\end{equation} 
\end{corollary}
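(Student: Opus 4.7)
The strategy is dominated convergence applied to the $s$-integral in \eqref{curent-2}. Embed everything into $\mathcal{H}$ via $\Pi_L$ as in Proposition \ref{s-conv-L}, and set
$$A_L(s)\,:=\,e^{is\mathbf{H}_L}V(Q)e^{-is\mathbf{H}_L},\qquad T_L\,:=\,[\rho(\mathbf{H}_L),\Pi_\pm],$$
together with the analogous $A(s),T$ obtained by formally taking $L=\infty$ (so $\Pi_\pm$ is multiplication by the characteristic function of the infinite half-cylinder and $V(Q)=v_+\Pi_++v_-\Pi_-$). The integrand of \eqref{curent-2} is $g_L(s)=\Tr(A_L(s)T_L)$, so the claim reduces to (i) pointwise convergence $g_L(s)\to g(s):=\Tr(A(s)T)$ for every $s\le 0$, and (ii) an $L$-independent, $L^1(\mathbb{R}_-)$ majorant for $s\mapsto\eta\chi^\prime(\eta s)g_L(s)$.

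The majorant is immediate: $A_L(s)$ is a unitary conjugate of $V(Q)$, so $\|A_L(s)\|\le\max(|v_+|,|v_-|)$ uniformly in $(s,L)$. Proposition \ref{convintrasa} shows $T_L\to T$ in trace norm, hence $\|T_L\|_{\mathbb{B}_1(\mathcal{H})}$ is bounded uniformly in $L$. Therefore $|g_L(s)|\le C$, and $|\eta\chi^\prime(\eta s)g_L(s)|\le C\eta|\chi^\prime(\eta s)|$ is integrable on $\mathbb{R}_-$ because $\chi^\prime\in L^1(\mathbb{R}_-)$.

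For pointwise convergence, decompose
$$g_L(s)-g(s)\,=\,\Tr\bigl(A_L(s)(T_L-T)\bigr)+\Tr\bigl((A_L(s)-A(s))T\bigr).$$
The first piece is bounded by $\max(|v_+|,|v_-|)\,\|T_L-T\|_{\mathbb{B}_1}$ and vanishes by Proposition \ref{convintrasa}. For the second, I first establish strong convergence $A_L(s)\to A(s)$ on $\mathcal{H}$. Corollary \ref{s-conv-H} provides strong convergence of $e^{\pm is\mathbf{H}_L}$; the sharp multiplications $\Pi_\pm$ converge strongly as $L\to\infty$ since their defining characteristic functions increase pointwise to that of the infinite half-cylinder (dominated convergence on vectors); a three-term $\varepsilon$-argument using the uniform bound $\|A_L(s)\|\le C$ assembles these three strong convergences into a strong convergence of the product $A_L(s)$. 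Since $T$ is trace class, write its singular value decomposition $T=\sum_n s_n|e_n\rangle\langle f_n|$ with $\sum_n s_n<\infty$; then
$$\Tr\bigl((A_L(s)-A(s))T\bigr)\,=\,\sum_n s_n\,\langle f_n,(A_L(s)-A(s))e_n\rangle,$$
each summand tends to zero by strong convergence and is dominated by $2\max(|v_+|,|v_-|)s_n$, so a final dominated convergence on the summation gives $\Tr((A_L(s)-A(s))T)\to 0$.

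Combining the pointwise convergence with the uniform majorant, dominated convergence on the $s$-integral yields \eqref{curent-22}. The only genuinely subtle point is that the bias $V(Q)$ itself depends on $L$ through the finite-length projections $\Pi_\pm$, so the strong convergence $A_L(s)\to A(s)$ must absorb simultaneously the convergence of the unitary groups from Corollary \ref{s-conv-H} and the convergence of the sharp multiplications to their half-infinite counterparts; once this is in place, everything else is a routine application of Propositions \ref{s-conv-L} and \ref{convintrasa}.
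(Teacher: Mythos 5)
Your proof is correct and follows exactly the same route the paper intends: pointwise convergence of the integrand from Corollary \ref{s-conv-H} (strong convergence of the evolution groups, combined with trace-norm convergence of $[\rho(\mathbf{H}_L),\Pi_\pm]$ from Proposition \ref{convintrasa} and the elementary fact that strong convergence against a fixed trace-class operator gives convergence of traces), plus an $L$-independent $L^1$ majorant supplied by $\|\chi'\|_{L^1}$ and the uniform bounds, so that Lebesgue dominated convergence in $s$ concludes. The paper compresses all of this into a one-sentence citation of Corollary \ref{s-conv-H}, Proposition \ref{convintrasa}, and dominated convergence; your write-up is simply that argument made explicit, including the correct (if minor) observation that once embedded in $\mathcal{H}$ the sandwiching projections $\Pi_L$ make the apparent $L$-dependence of $V(Q)$ and $\Pi_\pm$ harmless.
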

\begin{proof}
This is a direct consequence of Corollary \ref{s-conv-H}, 
Proposition \ref{convintrasa}, and the Lebesgue dominated convergence theorem.
\end{proof}

\section{The Adiabatic Limit}

From now on the leads are semiinfinite; the thermodynamic limit has
been taken. Hence in \eqref{adoua2} one has to interpret $L$ as
infinite. 
The adiabatic limit $\eta\rightarrow 0$ in the formula
(\ref{curent-22}) will in fact be an Abel limit once we can show
that the limit $\underset{s\rightarrow\infty}{\lim}e^{is\mathbf{H}}
V(Q) e^{-is\mathbf{H}}$ exists, at least for the strong topology and on a
certain subspace of $\mathbb{B}[\mathcal{H}]$. In view of the definition of
$V(Q)$ (see \eqref{biasul}) we are reduced to studying the limits 
$\underset{s\rightarrow\infty}{\lim}e^{is\mathbf{H}} \Pi_\pm
e^{-is\mathbf{H}}$. 

The idea is to consider the Hamiltonian $\mathbf{H}$ as the
perturbation of a {\it decoupled Hamiltonian}
$\overset{\circ}{\mathbf{H}}_a$ that commutes with the projections
$\Pi_\pm$. We define a {\it decoupled Hilbert space} 
$\overset{\circ}{\mathcal{H}}_a:=(\Pi_-\mathcal{H})\oplus
(\Pi_0\mathcal{H})\oplus (\Pi_+\mathcal{H})$. The decoupled
Hamiltonian is simply $\mathbf{H}$ with two extra Dirichlet boundary
conditions defined by the previous splitting. Let us denote it by 
$\overset{\circ}{\mathbf{H}}_a$. We assume that the internal boundary
defined by the condition $x_1=\pm a$ is smooth enough; it is so in the
cylinder case. It is important to note that:
\begin{equation}\label{prima4}
[\overset{\circ}{\mathbf{H}}_a,\Pi_\pm]=0.
\end{equation}

With these notations and definitions we now have:
\begin{equation}\label{adoua4}
e^{is\mathbf{H}} \Pi_\pm
e^{-is\mathbf{H}}\,=\,e^{is\mathbf{H}}
e^{-is\overset{\circ}{\mathbf{H}}}
\Pi_\pm e^{is\overset{\circ}{\mathbf{H}}} e^{-is\mathbf{H}}.
\end{equation}
Under our assumptions, the Hamiltonian $\mathbf{H}$ in \eqref{atreia4}
is nonnegative and has no singular
continuous spectrum. Let us further assume that $\mathbf{H}$ has no embedded
eigenvalues, and only a finite number of discrete eigenvalues of
finite multiplicity located below the essential spectrum (which can arise from the geometry we chose for
our system $\mathcal{X}_\infty$, \cite{DE}). Then let us
denote by $E_\alpha$ 
the finite dimensional orthogonal projection corresponding to a 
discrete eigenvalue $\lambda_\alpha$ of $\mathbf{H}$, by
$E_\infty$ the projection corresponding to its absolutely continuous
spectrum, and by $E_0:=\underset{\alpha\leq
  N}{\oplus}E_\alpha=1-E_\infty$ the finite dimensional projection on
its discrete spectrum. We can write:
\begin{align}\label{atreia4}
&\Tr_{\mathcal{H}}\left(e^{is\mathbf{H}} V(Q)
  e^{-is\mathbf{H}}\left[\rho(\mathbf{H}),\Pi_\pm\right]\right)=
\Tr_{\mathcal{H}}\left(E_\infty e^{is\mathbf{H}} V(Q) 
e^{-is\mathbf{H}}E_\infty\left[\rho(\mathbf{H}),\Pi_\pm\right]\right)\nonumber \\
&+\sum\limits_{\alpha\leq N}\Tr_{\mathcal{H}}\left(E_\alpha
  e^{is\mathbf{H}} V(Q)
  e^{-is\mathbf{H}}E_\infty\left[\rho(\mathbf{H}),\Pi_\pm\right]\right)\nonumber \\
&+\sum\limits_{\alpha\leq N}\Tr_{\mathcal{H}}\left(E_\infty
  e^{is\mathbf{H}} V(Q) e^{-is\mathbf{H}}E_\alpha\left[\rho(\mathbf{H}),\Pi_\pm\right]\right) \nonumber \\
&+\sum\limits_{\alpha\leq N,\beta\leq N}\Tr_{\mathcal{H}}\left(E_\alpha
  e^{is\mathbf{H}} V(Q) e^{-is\mathbf{H}}E_\beta\left[\rho(\mathbf{H}),\Pi_\pm\right]\right).
\end{align}
Let us show that the last term does not contribute to the current
after the adiabatic limit. We have the identity:
\begin{align}\label{apatra4}
&\Tr_{\mathcal{H}}\left(E_\alpha e^{is\mathbf{H}} V(Q)
  e^{-is\mathbf{H}}E_\beta
  \left[\rho(\mathbf{H}),\Pi_\pm\right]\right)\\
&=
e^{is(\lambda_\alpha-\lambda_\beta)}\Tr_{\mathcal{H}}\left(V(Q)E_\beta[\rho(\mathbf{H}),\Pi_\pm]E_\alpha\right).\nonumber 
\end{align}
Now for $\alpha=\beta$ we have
\begin{equation}\label{acincea4}
E_\alpha[\rho(\mathbf{H}),\Pi_\pm]E_\alpha=\rho(\lambda_\alpha)
(E_\alpha\Pi_\pm E_\alpha-E_\alpha\Pi_\pm E_\alpha)=0.
\end{equation}
For $\alpha\ne\beta$ we use the bound 
$$
\left|\Tr_{\mathcal{H}}\left(V(Q)E_\beta[\rho(\mathbf{H}),\Pi_\pm]
E_\alpha\right)\right|\leq\|V(Q)\|\; \|[\rho(\mathbf{H}),\Pi_\pm]\|_{\mathbb{B}_1}
$$
and the fact that the $s$-integral
$$
\eta\,\int_{-\infty}^0\,\chi^\prime(\eta s)\,e^{is\omega}\,ds=
(\mathcal{F}\chi^\prime)(\omega/\eta)\underset{\eta\rightarrow0}{\longrightarrow}0
$$
for any $\omega\ne 0$, as the Fourier transform of a $L^1$-function on $\mathbb{R}$.

Now let us study the second term in \eqref{adoua4} and prove that it
will also disappear after the adiabatic limit. We have the identity:
\begin{align}\label{asasea4}
&\Tr_{\mathcal{H}}\left(E_\alpha e^{is\mathbf{H}} V(Q)
  e^{-is\mathbf{H}}E_\infty
  \left[\rho(\mathbf{H}),\Pi_\pm\right]\right)\\
&=
\Tr_{\mathcal{H}}\left(V(Q) e^{-is(\mathbf{H}-\lambda_\alpha)}E_\infty
\left[\rho(\mathbf{H}),\Pi_\pm\right]E_\alpha \right)\nonumber 
\end{align}
and observe that (by interchanging the trace with the $s$-integral) we have
\begin{align}\label{asaptea4}
&\Tr_{\mathcal{H}} \left \{V(Q)\,\left(\eta\int_{-\infty}^0\,\chi^\prime(\eta
  s)\, e^{-is(\mathbf{H}-\lambda_\alpha)}\,ds \right )\,
E_\infty\left[\rho(\mathbf{H}),\Pi_\pm\right]E_\alpha \right\}\\
&=\Tr_{\mathcal{H}}\left\{V(Q)\,\left ((\mathcal{F}\chi^\prime)
(\eta^{-1}\frac{}{}(\mathbf{H}-\lambda_\alpha)E_\infty)\right )
\left[\rho(\mathbf{H}),\Pi_\pm\right]E_\alpha
\right\}.\nonumber 
\end{align}
Due to our hypothesis on $\mathbf{H}$ and the definition of 
$E_\infty$, we get:
$$
(\mathbf{H}-\lambda_\alpha)E_\infty\ \geq (\inf(\sigma_{ac}(\mathbf{H}))-\lambda_\alpha)E_\infty,
$$
and since $\mathcal{F}\chi^\prime$ converges to zero at infinity we obtain:
$$
\|(\mathcal{F}\chi^\prime)(\eta^{-1}(\mathbf{H}-\lambda_\alpha)E_\infty)\|_{\mathbb{B}[\mathcal{H}]}\,=\,
\underset{\mu\geq \inf(\sigma_{ac}(\mathbf{H}))-\lambda_\alpha }
{\sup}(\mathcal{F}\chi^\prime)(\mu/\eta)\,\underset{\eta\rightarrow0}
{\longrightarrow}\,0.
$$
The third term in \eqref{atreia4} can be treated in a similar way as
the second one. Therefore only the first term can give a
contribution, and let us identify it. Denote by
$P_{ac}(\overset{\circ}{\mathbf{H}})=\Pi_-\oplus\Pi_+$ the projector on
the absolutely continuous subspace of $\overset{\circ}{\mathbf{H}}$. 
We note that the incoming wave operators
at $-\infty$ associated to the pair of Hamiltonians
$(\overset{\circ}{\mathbf{H}},{\mathbf{H}})$ exist and are
complete. This can be shown in a number of different of ways, but here
we choose to {{} invoke} the invariance principle and the Kato-Rosenblum theorem. Indeed, the function
$-\rho$ is admissible (see Thm. XI.23 [RS III]), and we have: 
\begin{equation}\label{adoua5}
\mathbf{\Omega}_+(\overset{\circ}{\mathbf{H}},\mathbf{H})=
\mathbf{\Omega}_+(-\rho(\overset{\circ}{\mathbf{H}}),-\rho(\mathbf{H}))=
\mathbf{\Omega}_+(-\rho(\overset{\circ}{\mathbf{H}}),-\rho(\overset{\circ}{\mathbf{H}})-\Delta\rho),
\end{equation}
where the operator
$\Delta\rho:=\rho(\mathbf{H})-\rho(\overset{\circ}{\mathbf{H}})$ 
is trace class (more details will be given in the next section). Thus:
\begin{equation}\label{aopta4}
\mathbf{\Omega}_+\,:=\,s-\underset{s\rightarrow-\infty}{\lim}e^{is\mathbf{H}}e^{-is\overset{\circ}{\mathbf{H}}}
P_{ac}(\overset{\circ}{\mathbf{H}}),\quad {\rm Ran}(\Omega_+)=E_\infty.
\end{equation}

Using \eqref{adoua4}, \eqref{aopta4} and the fact that $
V(Q)=V(Q)P_{ac}(\overset{\circ}{\mathbf{H}})$, we obtain:
\begin{align}\label{anoua4}
& \underset{s\rightarrow-\infty}{\lim}\Tr_{\mathcal{H}}\left(E_\infty e^{is\mathbf{H}} 
V(Q)
e^{-is\mathbf{H}}E_\infty\left[\rho(\mathbf{H}),\Pi_\pm\right]\right)\
\nonumber \\
&=\underset{s\rightarrow-\infty}{\lim}\Tr_{\mathcal{H}}\left(E_\infty
  e^{is\mathbf{H}}e^{-is\overset{\circ}{\mathbf{H}}} V(Q)
  e^{is\overset{\circ}{\mathbf{H}}}e^{-is\mathbf{H}}E_\infty\left[\rho(\mathbf{H}),\Pi_\pm\right]\right)\nonumber \\
&=\Tr_{\mathcal{H}}\left(E_\infty
  \mathbf{\Omega}_+V(Q)\mathbf{\Omega}_+^*E_\infty\left[\rho(\mathbf{H}),\Pi_\pm\right]\right).
\end{align}
Using \eqref{anoua4} and \eqref{atreia4} in \eqref{curent-22} we can
write a formula for the adiabatic limit of the linear response current
at $t=0$ as:
\begin{align}\label{azecea4}
  &\tilde{j}_\pm :=\underset{\eta\rightarrow0}{\lim}\,\,\tilde{j}_{\eta}\,=\,i
\Tr_{\mathcal{H}}\left(E_\infty
  \mathbf{\Omega}_+V(Q)\mathbf{\Omega}_+^*E_\infty\left[\rho(\mathbf{H}),\Pi_\pm\right]\right)\\
&=\,v_-i\Tr_{E_\infty\mathcal{H}}\left(\mathbf{\Omega}_+\Pi_-\mathbf{\Omega}_+^*
\left[\rho(\mathbf{H}),\Pi_\pm\right]\right)+v_+i
\Tr_{E_\infty\mathcal{H}}\left(\mathbf{\Omega}_+\Pi_+\mathbf{\Omega}_+^*
\left[\rho(\mathbf{H}),\Pi_\pm\right]\right).\nonumber 
\end{align}
Recalling that the decoupled Hamiltonian commutes with the projections
$\Pi_\pm$ (see \eqref{prima4}), we can write:
$$
\left[\rho(\mathbf{H}),\Pi_\pm\right]\,=\,\left[(\rho(\mathbf{H})-\rho(\overset{\circ}{\mathbf{H}})),\Pi_\pm\right].
$$
Then
we have:
\begin{equation}\label{prima5}
\tilde{j}_\pm\,=\,v_-i\Tr_{E_\infty\mathcal{H}}\left(\mathbf{\Omega}_+
\Pi_-\mathbf{\Omega}_+^*
\left[\Delta\rho,\Pi_\pm\right]\right)+v_+i
\Tr_{E_\infty\mathcal{H}}\left(\mathbf{\Omega}_+\Pi_+\mathbf{\Omega}_+^*
\left[\Delta\rho,\Pi_\pm\right]\right).
\end{equation}
At this point we can show that if $v_+=v_-$, then the current is
zero. Indeed, since 
${\Omega}_+(\Pi_+ +\Pi_-)\mathbf{\Omega}_+^*=1$ on
$E_\infty\mathcal{H}$, and because 
$\Tr_{\mathcal{H}}(\left[\Delta\rho,\Pi_+\right])=0$ we can write: 
\begin{equation}\label{prima2255}
\tilde{j}_+\,=\,(v_--v_+)i\Tr_{E_\infty\mathcal{H}}\left(\mathbf{\Omega}_+
\Pi_-\mathbf{\Omega}_+^*
\left[\Delta\rho,\Pi_+\right]\right)-i\Tr_{E_{\rm pp}\mathcal{H}}(\left[\Delta\rho,\Pi_+\right]).
\end{equation}
But $\Tr_{E_{\rm
    pp}\mathcal{H}}(\left[\Delta\rho,\Pi_+\right])=\Tr(E_{\rm
  pp}(\mathbf{H})\left[\rho(\mathbf{H}),\Pi_+\right])=0$, therefore we
obtain:
\begin{equation}\label{prima55}
\tilde{j}_+\,=\,(v_--v_+)i\Tr_{E_\infty\mathcal{H}}\left(\mathbf{\Omega}_+
\Pi_-\mathbf{\Omega}_+^*
\left[\Delta\rho,\Pi_+\right]\right),
\end{equation}
which is a close analog of the formula given by NESS type approaches (see
\cite{AJPP} and references therein). A remark is in order here. In NESS type
approaches the reservoirs (leads) are suddenly coupled at $t=0$ and the
current is computed in the limit $t \rightarrow \infty$. In general, the
current has an oscillatory component  given by a term similar to the last term
in \eqref{atreia4} \cite{JP1}, so it does not have a definite limit at  $t
\rightarrow \infty$. What it is actually  computed by an averaging procedure, 
is the steady component of the current. As we have seen above, in the
adiabatic switching formalism the current itself has a limit as $\eta
\rightarrow 0$. This fact confirms (in our particular setting) the 
basic feature of the adiabatic switching, namely that it ``kills'' oscillatory
terms of various physical quantities.

\section{The Landauer-B{\"u}ttiker formula}

We will now compute the trace appearing in equation \eqref{prima55}
using a spectral representation of $\overset{\circ}{\mathbf{H}}$. By
doing this we will obtain the L-B formula. 
 
Let us enumerate a few technical results needed in order to justify
the computations which will follow next. In the next subsection we
will only give the proof of \eqref{adoua25}, since the
other points are just well known facts which can be found in any
standard book treating the stationary scattering theory.

\begin{proposition}\label{propteh} Consider some $n>1/2$. 
\begin{enumerate}
\item The operator
  $\Delta\rho=\rho(\mathbf{H})-\rho(\overset{\circ}{\mathbf{H}})$ is
  trace class, and if $\langle \x\rangle :=\sqrt{1+x_1^2}$ we have:
\begin{equation}\label{adoua25}
\langle\cdot\rangle^n (\Delta\rho) \langle\cdot\rangle^n \in \mathbb{B}(\mathcal{H}).
\end{equation}
\item Denote by $E_0\geq 0$ the infimum of the essential 
spectrum of $\overset{\circ}{\mathbf{H}}$. Then 
$\sigma_{\rm ac}(\overset{\circ}{\mathbf{H}})=[E_0,\infty)$. Moreover,
the spectrum of $\mathbf{H}$ in $[E_0,\infty)$ is absolutely
continuous, with finitely many (possibly none) eigenvalues. The set of
thresholds $\mathcal{T}$ (points where we do not have a Mourre estimate, see
\cite{mourre}) is discrete and contains infinitely many points. If $n$ is large enough then the mapping 
\begin{align}\label{atreia5}
& A:[E_0,\infty)\setminus \mathcal{T}\mapsto \mathbb{B}(\mathcal{H}), \\
&A(\lambda):=
\lim_{\epsilon\searrow
  0}\: \langle\cdot\rangle^{-n}(\mathbf{H}-\lambda-i\epsilon)^{-1}\langle\cdot\rangle^{-n}\nonumber 
\end{align}
is continuously differentiable. 
\item For $n$
  large enough, the
  mapping 
\begin{align}\label{apatra5}
& A_\rho:[0,\rho(E_0)]\setminus \rho(\mathcal{T})\mapsto \mathbb{B}(\mathcal{H}), \\
&A_\rho(t):=
\lim_{\epsilon\searrow
  0}\: \langle\cdot\rangle^{-n}(\rho(\mathbf{H})-t-i\epsilon)^{-1}\langle\cdot\rangle^{-n}\nonumber 
\end{align}
is continuously differentiable. 
\item We can choose a family of generalized eigenfunctions for the
  absolutely continuous part of the operator $\overset{\circ}{\mathbf{H}}$, which in the case of
 our cylinder-like semiinfinite leads can be {{} explicitly} written down.
 The multiplicity of the absolutely continuous spectrum is finite when
 the energy is restricted to compacts, but it is not constant and
 increases with the energy. A given channel will be indexed by
 $(\alpha,\sigma)$, where $\sigma\in\{\pm 1\}$ shows the left/right
 lead, and $\alpha$ indexes all other quantum numbers at a given
 energy.  Hence
 for $n>1/2$
 we have :
\begin{equation}\label{acincea5}
\langle \cdot\rangle ^{-n}\overset{\circ}{\varphi}_E^{(\alpha,\pm)}\in
\mathcal{H},\quad\overset{\circ}{\mathbf{H}}\overset{\circ}{\varphi}_E^{(\alpha,\pm)}
"="E\overset{\circ}{\varphi}_E^{(\alpha,\pm)}, \quad E \in[E_0,\infty)\setminus
  \mathcal{T}.
\end{equation}
\item With this choice, we can define the corresponding generalized
  eigenfunctions of $\mathbf{H}$ either with the help of the limiting
  absorption principle:
\begin{align}\label{asasea5}
&\langle \cdot\rangle ^{-n}{\varphi}_E^{(\alpha,\pm)}= 
\langle \cdot\rangle ^{-n}\overset{\circ}{\varphi}_E^{(\alpha,\pm)}
-A_\rho(\rho(E))\left \{\langle \cdot\rangle ^{n}\Delta\rho \langle
\cdot\rangle ^{n}\right \}
\left \{\langle \cdot\rangle ^{-n}\overset{\circ}{\varphi}_E^{(\alpha,\pm)}\right
\},\\
& E \in[E_0,\infty)\setminus
  \mathcal{T},\nonumber 
\end{align}
or as solutions of the Lippmann-Schwinger equation (with the usual
abuse of notation):
\begin{align}\label{asaptea5}
&{\varphi}_E^{(\alpha,\pm)}= \overset{\circ}{\varphi}_E^{(\alpha,\pm)}
-(\rho(\overset{\circ}{\mathbf{H}})-\rho(E)-i0_+)^{-1}(\Delta\rho )
{\varphi}_E^{(\alpha,\pm)},\\
& E \in[E_0,\infty)\setminus
  \mathcal{T}.\nonumber 
\end{align}
If $n$ is large enough, then the map 
\begin{equation}\label{aopta5}
[E_0,\infty)\setminus
  \mathcal{T}\ni E\mapsto \langle \cdot\rangle ^{-n}{\varphi}_E^{(\alpha,\pm)}
  \in \mathcal{H}
\end{equation}
is continuously differentiable. 
\end{enumerate}
\end{proposition}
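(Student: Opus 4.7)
The plan is to exploit that $\mathbf{H}$ and $\overset{\circ}{\mathbf{H}}$ differ only by imposing additional Dirichlet conditions on the two transverse planes $x_1=\pm a$, so $\Delta\rho$ should have an integral kernel concentrated in exponentially thin layers around those planes; the polynomial weights $\langle\cdot\rangle^n$ on either side should then be harmless. Concretely, I would start from the Cauchy representation
\[
\Delta\rho=\frac{i}{2\pi}\int_{\mathcal{C}_r}\rho(z)\,\bigl[\mathbf{R}(z)-\overset{\circ}{\mathbf{R}}(z)\bigr]\,dz,
\]
and, as in the proof of Proposition~\ref{convintrasa}, integrate by parts in $z$ enough times so that $\rho$ is replaced by a function $\rho_N$ still decaying exponentially in $|\Re z|$, while the integrand carries higher powers of the resolvents. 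Using the telescoping identity $\mathbf{R}^N-\overset{\circ}{\mathbf{R}}^N=\sum_k \mathbf{R}^k(\mathbf{R}-\overset{\circ}{\mathbf{R}})\overset{\circ}{\mathbf{R}}^{N-1-k}$ I reduce the task to bounding $\langle\cdot\rangle^n \mathbf{R}^k(z)(\mathbf{R}(z)-\overset{\circ}{\mathbf{R}}(z))\overset{\circ}{\mathbf{R}}^{N-1-k}(z)\langle\cdot\rangle^n$ with only polynomial growth in $\lambda=\sqrt{1+(\Re z)^2}$ along $\mathcal{C}_r$.

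Next I would derive a geometric resolvent identity for the single difference factor. Let $\xi_-+\xi_0+\xi_+\equiv 1$ be a smooth partition of unity, with $\xi_-$ supported strictly inside the left lead (vanishing in a neighborhood of $x_1=-a$), $\xi_+$ strictly inside the right lead (vanishing near $x_1=a$), and $\xi_0$ supported near the sample. A parametrix
\[
S(z):=\xi_0\mathbf{R}(z)\xi_0+\xi_-\overset{\circ}{\mathbf{R}}(z)\xi_-+\xi_+\overset{\circ}{\mathbf{R}}(z)\xi_+
\]
then satisfies both $(\mathbf{H}-z)S(z)=1+T(z)$ and $(\overset{\circ}{\mathbf{H}}-z)S(z)=1+T'(z)$, where the remainders $T, T'$ are first-order differential operators carrying factors of $\nabla\xi_j$ and hence are supported away from the internal boundaries $x_1=\pm a$. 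Subtracting yields a relation of the schematic form $\mathbf{R}(z)-\overset{\circ}{\mathbf{R}}(z)=\mathbf{R}(z)T(z)-\overset{\circ}{\mathbf{R}}(z)T'(z)$, which exhibits the difference as a product in which at least one cutoff factor $\nabla\xi_j$ localizes the operator away from both boundary planes.

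With this identity, I apply the Combes-Thomas bounds of Lemma~\ref{CTestim} to propagate exponential weights $e^{\pm\delta d/\lambda}$ across each resolvent factor; the compactly supported cutoffs in $T, T'$ freely absorb such weights. Since the pointwise bound $\langle x_1\rangle^n\leq C\,\lambda^n\, e^{\delta|x_1|/\lambda}$ holds, the polynomial weight on either side of the product can be converted into a Combes-Thomas-type exponential at the price of a factor $\lambda^n$. Arranging the exponential weights so that they cancel consistently along the operator product, I obtain a norm bound on $\langle\cdot\rangle^n[\mathbf{R}(z)-\overset{\circ}{\mathbf{R}}(z)]\langle\cdot\rangle^n$ growing only polynomially in $\lambda$; the exponential decay of $\rho_N$ along $\mathcal{C}_r$ then secures integrability and gives \eqref{adoua25}.

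The main technical obstacle is making the parametrix rigorous across two different form domains: the range of $\xi_\pm\overset{\circ}{\mathbf{R}}(z)\xi_\pm$ vanishes at $x_1=\pm a$ only in the trace sense, so a nontrivial jump of the normal derivative across that plane would make $(\mathbf{H}-z)$ produce a distributional $\delta'$ contribution rather than an $L^2$ function. The construction therefore requires that each $\xi_\pm$ vanish in a whole neighborhood of the corresponding plane, so that after multiplication by $\xi_\pm$ the extension by zero across the Dirichlet plane is genuinely $H^2$ and all compositions land in $L^2$. An alternative, conceptually cleaner route is to express $\mathbf{R}(z)-\overset{\circ}{\mathbf{R}}(z)$ directly via a Krein-type formula as a product of a Poisson operator for $\mathbf{H}-z$ and the Dirichlet trace at $x_1=\pm a$; both factors decay exponentially away from the corresponding plane at the Combes-Thomas rate, which is exactly what the weighted bound demands.
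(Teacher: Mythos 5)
Your overall strategy coincides with the paper's own sketch for item~1 of the proposition (the only item the paper actually proves; items~2--5 are cited as standard stationary scattering facts): a geometric parametrix built from cutoffs vanishing near the interior Dirichlet planes $x_1=\pm a$, Combes-Thomas propagation of exponential weights through the resolvents, and a Cauchy contour representation with integration by parts in $z$. Two execution points need repair, though. First, with a \emph{linear} partition $\xi_-+\xi_0+\xi_+=1$ and the two-sided parametrix $S(z)=\xi_0\mathbf{R}(z)\xi_0+\xi_-(\overset{\circ}{\mathbf{H}}-z)^{-1}\xi_-+\xi_+(\overset{\circ}{\mathbf{H}}-z)^{-1}\xi_+$, one obtains $(\mathbf{H}-z)S(z)=\xi_-^2+\xi_0^2+\xi_+^2+T(z)$, which is not $1+T(z)$; you need a \emph{quadratic} partition $\sum_j\xi_j^2=1$, as the paper does with $\Phi_L^2+\tilde{\Phi}_L^2=1$.

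Second, and more substantively, the companion relation $(\overset{\circ}{\mathbf{H}}-z)S(z)=1+T'(z)$ does not exist: since $\xi_\pm$ must vanish in neighborhoods of $x_1=\pm a$, the middle cutoff $\xi_0$ is necessarily nonzero on those planes, so $\xi_0\mathbf{R}(z)\xi_0$ does not map into the domain of $\overset{\circ}{\mathbf{H}}$ and applying $\overset{\circ}{\mathbf{H}}-z$ to it is ill-defined. Your closing remark on form domains correctly flags the mirror-image issue (that the range of $\xi_\pm(\overset{\circ}{\mathbf{H}}-z)^{-1}\xi_\pm$ must land in the domain of $\mathbf{H}$, whence $\xi_\pm$ must vanish near the planes), but overlooks this one, and consequently the step ``subtract the two parametrix equations'' breaks down as written. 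The repair is to use only the single equation $(\mathbf{H}-z)S(z)=1+T(z)$: because the decoupled resolvent already appears inside $S(z)$, one can write $\mathbf{R}(z)-(\overset{\circ}{\mathbf{H}}-z)^{-1}=S(z)-(\overset{\circ}{\mathbf{H}}-z)^{-1}-\mathbf{R}(z)T(z)$, and the term $S(z)-(\overset{\circ}{\mathbf{H}}-z)^{-1}$ rearranges into pieces each carrying a compactly supported cutoff away from infinity---this is precisely the identity \eqref{acincea6} in the paper. With that fix, your Combes-Thomas argument (converting $\langle x_1\rangle^n\leq C\lambda^n e^{\delta|x_1|/\lambda}$ into propagable weights) and the Cauchy-integral argument using the decay of $\rho_N$ along $\mathcal{C}_r$ go through and give \eqref{adoua25}; writing each term as a product of two Hilbert--Schmidt factors yields the trace-class claim. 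The Krein-type formula you mention is a legitimate alternative route, not the one the paper pursues.
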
  
Let us now use this proposition in order to compute the current. The
main idea is to compute the integral kernels of the two trace class 
operators in \eqref{prima5}, and then to compute the trace as the
integral of the diagonal values of their kernels. We use for this the
generalized eigenvalues of $\mathbf{H}$. Let us choose two energies
$E,E'\in \sigma_{\rm ac}(\mathbf{H})\setminus \mathcal{T}$. Denote by
$(\alpha,\pm)$ and $(\alpha',\pm)$ the quantum numbers describing the spectral 
multiplicity of $\mathbf{H}$ at $E$ and $E'$. 
Then in the spectral representation
of $\mathbf{H}$, an operator like $E_\infty\mathbf{\Omega}_+
\Pi_-\mathbf{\Omega}_+^* (\Delta\rho )\Pi_+$ will have the integral
kernel (here $\sigma,\sigma'\in \{+,-\}$ indicate the leads): 
\begin{equation}\label{anoua5}
\mathcal{I}(E,\alpha,\sigma; E',\alpha',\sigma')=
\delta_{\sigma,+}
\langle (\Delta\rho)
\Pi_+{\varphi}_{E'}^{(\alpha',\sigma')},{\varphi}_E^{(\alpha,+)}\rangle .
\end{equation}
In deriving this equation we formally used the "intertwining" property 
\begin{equation}\label{azecea5}
\mathbf{\Omega}_+^*\varphi_E^{(\alpha,\sigma)}\;"="\;
\overset{\circ}{\varphi}_E^{(\alpha,\sigma)}
\end{equation}
since the wave operators are unitary between the absolutely continuous
subspaces of $\overset{\circ}{\mathbf{H}}$ and $\mathbf{H}$. The
scalar product is in fact a duality bracket between weighted
spaces, and we used \eqref{atreia5}. This integral kernel is jointly
continuous in its energy variables outside the set of
{{} thresholds}, due to \eqref{aopta5}. Thus when we compute the trace of  
$E_\infty\mathbf{\Omega}_+
\Pi_-\mathbf{\Omega}_+^* (\Delta\rho )\Pi_+$, we may write:
\begin{equation}\label{aunspea5}
{\rm Tr}E_\infty\mathbf{\Omega}_+
\Pi_-\mathbf{\Omega}_+^* (\Delta\rho )\Pi_+=\lim_
{S\to \sigma_{\rm ac}(\mathbf{H})\setminus \mathcal{T}}
\int_S\sum_{\alpha,\sigma}\mathcal{I}(E,\alpha,\sigma; E,\alpha,\sigma)dE,
\end{equation}
where $S$ denotes compact sets included in 
$\sigma_{\rm ac}(\mathbf{H})\setminus \mathcal{T}$, and the limit
means that the Lebesgue measure of $\sigma_{\rm
  ac}(\mathbf{H})\setminus 
(\mathcal{T}\cup S)$ goes to zero. 

Now after undoing the commutator in \eqref{prima55}, we compute the
traces in the way we described above, keeping in mind that the energy integral has
to be understood as a limit avoiding the thresholds. Thus we obtain:

\begin{align}\label{adoispea5}
&\tilde{j}_+ =-\,i\int_{E_0}^{\infty}\sum_{\alpha} 
\left\{(v_--v_+)\langle
  [\Delta\rho,\Pi_+]\varphi_E^{(\alpha,-)},\varphi_E^{(\alpha,-)}\rangle 
\right\}\,dE\nonumber \\
&=2(v_--v_+)\int_{E_0}^{\infty}\sum_{\alpha} 
\left\{\Im \langle
  \Delta\rho \Pi_+\varphi_E^{(\alpha,-)},\varphi_E^{(\alpha,-)}\rangle
\right\}\,dE.
\end{align}
Using the Lippmann-Schwinger equation \eqref{asaptea5} we obtain 
$$\Delta\rho \Pi_+\varphi_E^{(\alpha,-)}=-\Delta\rho \Pi_+(\rho(\overset{\circ}{\mathbf{H}})-\rho(E)-i0_+)^{-1}(\Delta\rho )
{\varphi}_E^{(\alpha,-)},$$ 
and we insert this in \eqref{adoispea5}. Using the generalized
eigenfunctions of $\overset{\circ}{\mathbf{H}}$, the limiting
absorption principle, the various regularity properties listed in
Proposition \ref{propteh}, and the Sokhotskii-Plemelj formula, we obtain:
\begin{align}\label{atreispea5}
& \Im \left \langle \Pi_+(\rho(\overset{\circ}{\mathbf{H}})-\rho(E)-i0_+)^{-1}(\Delta\rho )\varphi_E^{(\alpha,-)}
  ,(\Delta\rho )\varphi_E^{(\alpha,-)}\right \rangle \\
&=\frac{\pi}{\rho'(E)}\sum_{\alpha'}\left \vert \left \langle (\Delta\rho
)\varphi_E^{(\alpha,-)},\overset{\circ}{\varphi}_E^{(\alpha',+)}\right
\rangle \right
\vert^2.\nonumber 
\end{align}
Thus the current reads as: 
\begin{align}\label{apaispea5}
&\tilde{j}_+ =2\pi (v_+-v_-)\int_{E_0}^{\infty}\sum_{\alpha,\alpha'} \frac{1}{\rho'(E)}\left \vert \left \langle (\Delta\rho
)\varphi_E^{(\alpha,-)},\overset{\circ}{\varphi}_E^{(\alpha',+)}\right
\rangle \right
\vert^2 dE.
\end{align}
We now have to relate the above integrand with the
scattering matrix. We know that the $S$ matrix commutes with
$\overset{\circ}{\mathbf{H}}$, and so does the $T$ matrix defined as
$\frac{1}{2\pi i}(1-S)$. In the spectral representation of
$\overset{\circ}{\mathbf{H}}$ induced by $
  \{\overset{\circ}{\varphi}_E^{(\alpha,\sigma)}\}_{\alpha,\sigma}$, the $T$ operator is a direct integral
with a fiber which is a finite dimensional matrix. Now using the
correspondence principle (see \eqref{adoua5}), and formula ${\rm
  (4)}_\pm$, page 233 in \cite{Yafaev} we can write (see also Thm. XI.42 in
\cite{RS3}):
\begin{equation}\label{prima6}
T_{\alpha,\sigma;\alpha',\sigma'}(E)=\frac{1}{\rho'(E)} \left \langle (\Delta\rho
)\varphi_E^{(\alpha',\sigma')},\overset{\circ}{\varphi}_E^{(\alpha,\sigma)}\right
\rangle.
\end{equation}
The Landauer-B\"uttiker formula becomes:
\begin{align}\label{apaispea77}
&\tilde{j}_+ =2\pi (v_+-v_-)\int_{E_0}^{\infty}\sum_{\alpha,\alpha'} \rho'(E)\left \vert T_{\alpha',+;\alpha,-}(E)\right
\vert^2 dE.
\end{align}
We can now state the main result of our paper:
\begin{theorem}\label{teoremaprincipala}
Assume that the system has $N$ leads. Then the charge variation of
lead $f=1$ defined after the thermodynamic and adiabatic limits, in
the linear response regime, reads as:
\begin{equation}\label{kkk1}
j_1=\sum_{f=2}^N(v_f-v_1)i{\rm Tr}_{E_\infty\mathcal{H}}\left
  (\Omega_+\Pi_f\Omega_+^*[\Delta\rho,\Pi_1]\right ).
\end{equation}
In terms of transmittance between leads we have:
\begin{align}\label{adouazecea5}
j_1 =2\pi \sum_{f=2}^N(v_1-v_f)\int_{E_0}^{\infty}\sum_{\alpha,\alpha'} \rho'(E)\left \vert T_{\alpha',1;\alpha,f}(E)\right
\vert^2 dE.
\end{align}

\end{theorem}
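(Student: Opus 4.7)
The plan is to retrace the derivation of Sections~2--4 with $N$ leads in place of two, and then to recycle the identifications with the scattering matrix from the first part of Section~5. Let $\Pi_1,\dots,\Pi_N$ be the orthogonal projections onto the $N$ leads, and write the bias as $V(Q)=\sum_{f=1}^N v_f\Pi_f$. None of the estimates in Propositions~\ref{est-HW}, \ref{est-Omega}, \ref{convintrasa}, nor in Corollary~\ref{limitacurrrent}, depends on the number of leads, so the finite-volume regularization, the thermodynamic limit, and the first stages of the adiabatic limit carry over verbatim. The computation leading to \eqref{prima5}, applied to the charge observable $\Pi_1$ of the first lead, then yields
\begin{equation*}
\tilde j_1 \;=\; i\sum_{f=1}^N v_f\,\mathrm{Tr}_{E_\infty\mathcal{H}}\!\left(\Omega_+\Pi_f\Omega_+^{*}[\Delta\rho,\Pi_1]\right).
\end{equation*}

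To turn this into \eqref{kkk1} I would invoke the unitarity relation $\sum_{f=1}^N\Omega_+\Pi_f\Omega_+^{*}=\mathbf{1}$ on $E_\infty\mathcal{H}$, which follows from $V(Q)=V(Q)P_{ac}(\overset{\circ}{\mathbf{H}})$ and the intertwining property of the wave operators used in \eqref{anoua4}. Substituting $\Omega_+\Pi_1\Omega_+^{*}=\mathbf{1}-\sum_{f\neq 1}\Omega_+\Pi_f\Omega_+^{*}$ into the $f=1$ term of the sum produces a contribution $i v_1\,\mathrm{Tr}_{E_\infty\mathcal{H}}[\Delta\rho,\Pi_1]$, which vanishes exactly as in the two-lead case: the full trace $\mathrm{Tr}_{\mathcal{H}}[\Delta\rho,\Pi_1]$ is zero because $\Delta\rho$ is trace class and $\Pi_1$ bounded, while the pure-point contribution $\mathrm{Tr}_{E_{pp}\mathcal{H}}[\rho(\mathbf{H}),\Pi_1]$ also vanishes because each $E_\alpha[\rho(\mathbf{H}),\Pi_1]E_\alpha$ is zero on a fixed eigenspace (see \eqref{acincea4}). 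The surviving terms regroup into \eqref{kkk1}.

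For the Landauer--B\"uttiker form \eqref{adouazecea5} I would repeat the spectral-representation argument of Section~5 channel by channel, replacing the pair $(\Pi_-,\Pi_+)$ by $(\Pi_f,\Pi_1)$ for each $f\neq 1$. Using the generalized eigenfunctions $\varphi_E^{(\alpha,\sigma)}$ of Proposition~\ref{propteh}, the integral kernel of $E_\infty\Omega_+\Pi_f\Omega_+^{*}(\Delta\rho)\Pi_1$ reads $\delta_{\sigma,f}\,\langle(\Delta\rho)\Pi_1\varphi_{E'}^{(\alpha',\sigma')},\varphi_E^{(\alpha,f)}\rangle$, and the computation of its diagonal trace proceeds just as in \eqref{aunspea5}--\eqref{apaispea5}: one undoes the commutator, applies the Lippmann--Schwinger equation \eqref{asaptea5}, and uses Sokhotskii--Plemelj to extract the imaginary part. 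Each channel pair $(\alpha,f)\to(\alpha',1)$ then contributes $\rho'(E)\,|T_{\alpha',1;\alpha,f}(E)|^2$ to the energy integrand via the identity \eqref{prima6}, which delivers \eqref{adouazecea5} after the sum on $f$.

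The only genuine technical point that would need to be checked afresh is that Proposition~\ref{propteh}---in particular the limiting absorption principle, discreteness of the threshold set $\mathcal{T}$, and the differentiability of the generalized eigenfunctions---remains valid for the $N$-lead geometry. This is a purely geometric issue about $\mathcal{X}_\infty$; since the leads are, by hypothesis, semi-infinite cylinders smoothly pasted to a bounded sample, a Mourre estimate and the corresponding LAP can be established lead by lead in the same way as for two leads, and the scattering-theoretic identifications invoked above then hold channel by channel without modification.
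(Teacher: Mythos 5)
Your proposal is correct and follows essentially the same strategy the paper uses: the theorem is stated as an immediate $N$-lead generalization of the two-lead derivation carried out in Sections 2--5, and your write-up simply makes that implicit argument explicit. The algebraic step from $\tilde j_1 = i\sum_{f=1}^N v_f\,\mathrm{Tr}_{E_\infty\mathcal{H}}(\Omega_+\Pi_f\Omega_+^*[\Delta\rho,\Pi_1])$ to \eqref{kkk1} via $\sum_{f=1}^N \Omega_+\Pi_f\Omega_+^*=1$ on $E_\infty\mathcal{H}$, together with the vanishing of the $v_1$-proportional term (trace cyclicity for the full trace plus \eqref{acincea4} for the pure-point part), matches the route the paper takes between \eqref{prima5} and \eqref{prima55}. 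The channel-by-channel spectral computation you describe for \eqref{adouazecea5} reproduces \eqref{anoua5}--\eqref{apaispea77} with $(\Pi_-,\Pi_+)\to(\Pi_f,\Pi_1)$; note in passing that the kernel you write, with $\delta_{\sigma,f}$ and $\varphi_E^{(\alpha,f)}$, is the one consistent with \eqref{adoispea5}, whereas \eqref{anoua5} as printed appears to carry a sign/label typo. Your final remark that Proposition~\ref{propteh} (LAP, threshold structure, regularity of generalized eigenfunctions) must be re-verified for the $N$-lead waveguide, and that this is geometric and local in each lead, is exactly the right caveat.
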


\subsection{A continuity equation for the current}

If we go back to \eqref{aopta3} and instead of $\Pi_\pm$ we use some
smaller projections $\hat{\Pi}_\pm$ which have the property that
$\hat{\Pi}_\pm  \Pi_\pm=\hat{\Pi}_\pm$ (but still corresponding to an
infinitely long portion of a lead), then we can again define a
current and obtain an expression as in \eqref{curent-2}, where
$\Pi_\pm$ are replaced by $\hat{\Pi}_\pm$. The thermodynamic and
adiabatic limits can be performed in the same manner, and we would
obtain a formula very similar to \eqref{kkk1}:

\begin{equation}\label{kkk2}
\hat{j}_1=\sum_{f=2}^N(v_f-v_1)i{\rm Tr}_{E_\infty\mathcal{H}}\left
  (\Omega_+\Pi_f\Omega_+^*[\rho(\mathbf{H}),\hat{\Pi}_1]\right ).
\end{equation}
Note that $\overset{\circ}{\mathbf{H}}$ is still defined with respect
to the old decomposition imposed by $\Pi$'s, and NOT by
$\hat{\Pi}_1$. That is $\overset{\circ}{\mathbf{H}}$ does NOT commute
with $\hat{\Pi}_1$. Another important observation is that
$[\rho(\mathbf{H}),\hat{\Pi}_1]$ is already trace class even before
inserting something commuting with $\hat{\Pi}_1$, as we can see from the
proof of Proposition \ref{convintrasa}. If we {{} subtract} $\hat{j}_1$
from $j_1$ we obtain (denote by $\pi_1:=\Pi_1-\hat{\Pi}_1$):
\begin{equation}\label{kkk3}
j_1-\hat{j}_1=\sum_{f=2}^N(v_f-v_1)i{\rm Tr}_{E_\infty\mathcal{H}}\left
  (\Omega_+\Pi_f\Omega_+^*[\rho(\mathbf{H}),\pi_1]\right ).
\end{equation}
Now $\pi_1$ corresponds to the multiplication with a function with
compact support. Using again the methods of Proposition
\ref{convintrasa}, one can show that $\rho(\mathbf{H})\pi_1$ is
trace-class (one can write it as a product of two Hilbert-Schmidt
operators). Then 
\begin{align}\label{prima77}{\rm Tr}_{E_\infty\mathcal{H}}\left
  (\Omega_+\Pi_f\Omega_+^*[\rho(\mathbf{H}),\pi_1]\right )={\rm Tr}\left
  (\Omega_+\Pi_f\Omega_+^*\rho(\mathbf{H})\pi_1 \right )-{\rm Tr}\left
  (\Omega_+\Pi_f\Omega_+^*\pi_1\rho(\mathbf{H}) \right ).
\end{align}
Let us show the operator $\rho(\mathbf{H})\Omega_+\Pi_f\Omega_+^*\pi_1$ is
also trace class: note first that
$[\Omega_+\Pi_f\Omega_+^*,\rho(\mathbf{H})]=0$ (this is due to the
intertwining property of wave operators and due to
$[\overset{\circ}{\mathbf{H}},\Pi_f]=0$); second, we can put near $\pi_1$
enough resolvents in order to obtain the product of two
Hilbert-Schmidt operators. Then let us prove the trace {{} cyclicity}:
\begin{align}\label{atreia77}{\rm Tr}\left
  (\Omega_+\Pi_f\Omega_+^*\pi_1\rho(\mathbf{H}) \right )={\rm Tr}\left (
  \rho(\mathbf{H})\Omega_+\Pi_f\Omega_+^*\pi_1 \right ).
\end{align}
Indeed, take $\pi_2$ with compact support such that $\pi_2\pi_1=\pi_1$.
The usual trace {{} cyclicity} gives:
\begin{align}\label{apatra77}{\rm Tr}\left
  (\Omega_+\Pi_f\Omega_+^*\pi_1\rho(\mathbf{H}) \right )={\rm Tr}\left
  (\Omega_+\Pi_f\Omega_+^*\pi_1\pi_2 \rho(\mathbf{H}) \right )={\rm Tr}\left (
  \pi_2\rho(\mathbf{H})\Omega_+\Pi_f\Omega_+^*\pi_1 \right ).
\end{align}
Now take back $\pi_2$ to the right and \eqref{atreia77} is proved. Thus we can write:
\begin{align}\label{adoua77}{\rm Tr}\left
  (\Omega_+\Pi_f\Omega_+^*\pi_1\rho(\mathbf{H}) \right )={\rm Tr}\left (
  \rho(\mathbf{H})\Omega_+\Pi_f\Omega_+^*\pi_1 \right )={\rm Tr}\left
  (\Omega_+\Pi_f\Omega_+^*\rho(\mathbf{H}) \pi_1\right )
\end{align}
where in the second equality we used that
$[\Omega_+\Pi_f\Omega_+^*,\rho(\mathbf{H})]=0$. Use this in
\eqref{prima77} and obtain
$$j_1=\hat{j}_1,$$
which shows that the current is the same no matter where we measure it
on the lead.

\subsection{Sketching the proof of \eqref{adoua25}}
\begin{proof}  

The main idea is to use the geometric perturbation theory, in the same
spirit as in Proposition \ref{convintrasa}. With the functions
introduced in \eqref{adoua} and \eqref{atreia}, choose $L$ large
enough such that
$\mathbf{H}\tilde{\Phi}_L=\overset{\circ}{\mathbf{H}}\tilde{\Phi}_L$.
This means that the "sample" is contained in the domain where
$\Phi_L=1$. Then keep $L$ fixed. Now for $z\in \mathbb{C}\setminus\R$
we define the parametrix:
\begin{equation}\label{adoua6}
S(z):=\Phi_L({\mathbf{H}}-z)^{-1}\Phi_L+\tilde{\Phi}_L(\overset{\circ}{\mathbf{H}}-z)^{-1}\tilde{\Phi}_L.
\end{equation}
We have that the range of $S(z)$ is mapped into the domain of
$\mathbf{H}$ and:
\begin{align}\label{atreia6}
({\mathbf{H}}-z)S(z)&=:1+T(z),\\ 
 T(z)&= [-2(\nabla\Phi_L)\nabla
-(\Delta\Phi_L)]({\mathbf{H}}-z)^{-1}\Phi_L\nonumber \\
&+[-2(\nabla\tilde{\Phi}_L)\nabla
-(\Delta\tilde{\Phi}_L)](\overset{\circ}{\mathbf{H}}-z)^{-1}\tilde{\Phi}_L.\nonumber
\end{align}
Similarly as in \eqref{acincea} we can write: 
\begin{equation}\label{apatra6}
({\mathbf{H}}-z)^{-1}=S(z)-({\mathbf{H}}-z)^{-1}T(z).
\end{equation}
Thus:
\begin{align}\label{acincea6}
({\mathbf{H}}-z)^{-1}-(\overset{\circ}{\mathbf{H}}-z)^{-1}&=\Phi_L({\mathbf{H}}-z)^{-1}\Phi_L+
(\tilde{\Phi}_L-1)(\overset{\circ}{\mathbf{H}}-z)^{-1}\tilde{\Phi}_L\nonumber
\\
&+(\overset{\circ}{\mathbf{H}}-z)^{-1}(\tilde{\Phi}_L-1)-({\mathbf{H}}-z)^{-1}T(z).
\end{align}
Then using the Cauchy integral representation as in \eqref{aopta}, we
see that the operator $\Delta\rho$ can be expressed as an integral
where the integrand consists of several terms, and we can write each
of them as a product of two Hilbert-Schmidt operators, due to the
support properties of our cut-off functions and the exponential decay
of various integral kernels. In order to prove \eqref{adoua25} we need
to propagate some exponential decay near both polynomials
$\langle\cdot\rangle^n$, and this can be done as in Proposition
\ref{convintrasa}. We do not give further details. \end{proof}

\vspace{0.5cm}

\noindent{\bf Acknowledgements.} Part of this work was done during a visit of G. Nenciu
at the Department of Mathematical Sciences, Aalborg University;
 both hospitality and financial support are gratefully acknowledged.
H. Cornean acknowledges support from the Danish F.N.U. grant
{\it Mathematical Physics and Partial Differential Equations}. G. Nenciu was partially supported by 
CEEX Grant D11-45/2005. R. Purice was partially supported by CEEX Grant  2-CEx06-11-97/2006.

\end{document}